\newtheorem{lemma}{Lemma}
\newtheorem{thm}[lemma]{Theorem}
\newtheorem{cor}[lemma]{Corollary}
\newtheorem{defn}{Definition}
\newtheorem{Remark}{Remark}
\newtheorem{exm}{Example}
\renewcommand{\epsilon}{\varepsilon}
\renewcommand{\le}{\leqslant}
\renewcommand{\ge}{\geqslant}
\def\F{\mathbb{F}}
\def\CC{\mathbb{C}}
\def\ZZ{\mathbb{Z}}
\def\PP{\mathbb{P}}
\def\MM{\mathbb{M}}
\def \mL {\mathcal{L}}
\def\cL{{\mathcal L}}
\def \cX {\mathcal{X}}
\def \Tr {{\rm Tr}}
\def\cA{\bar{A}}
\def \ba {{\bf a}}
\def \bc {{\bf c}}
\def \bx {{\bf x}}
\def \bh {{\bf h}}
\def \bu {{\bf u}}
\def \bv {{\bf v}}
\def \bo {{\bf 0}}
\def\supp {{\rm supp }}
\def \wt {{\rm wt}}
\def\bp{{\bar{\phi}}}
\def \res {{\rm res}}
\def\g{{\mathfrak{g}}}
\begin{document}

\date{}
\onecolumn
\title{Constructions of $k$-uniform states in heterogeneous systems}% Force line breaks with \\
%\thanks{A footnote to the article title}%

\author{Keqin Feng,
\thanks{Keqin Feng is with Department of Mathematical Sciences, Tsinghua University, Beijing, China.}
Lingfei Jin
\thanks{Lingfei Jin is with School of Computer Science, Shanghai Key Laboratory of Intelligent Information Processing,  Fudan University, Shanghai 200433, China.}
Chaoping Xing and Chen Yuan
\thanks{Chaoping Xing and Chen Yuan are with the School of Electronic Information and Electrical Engineering, Shanghai Jiao Tong University, Shanghai 200240, China.}
\thanks{K. Feng is supported by the National Natural Science Foundation of China under Grants 12031011. }
\thanks{L. Jin is supported  by the National Natural Science Foundation of China under Grant  12271110,
National Key Research and Development Project 2021YFE0109900, and Shanghai Rising-Star Program (20QA1401100), Shanghai Science and Technology Program (Project No. 21JC1400600).}}
%\thanks{C. Xing and C. Yuan are supported by .}

\maketitle

% It is always \today, today,
             %  but any date may be explicitly specified

\begin{abstract}
A pure quantum state of $n$ parties associated with the Hilbert space $\CC^{d_1}\otimes \CC^{d_2}\otimes\cdots\otimes \CC^{d_n}$ is called $k$-uniform if all the reductions to $k$-parties are maximally mixed. The $n$  partite system is called homogenous if the local dimension $d_1=d_2=\cdots=d_n$, while it is called heterogeneous if the local dimension are not all equal. $k$-uniform sates play an important role in quantum information theory. There are many progress in characterizing and constructing $k$-uniform states in homogeneous systems. However, the study of entanglement for heterogeneous systems is much more challenging than that for the homogeneous case. There are very few results known for the $k$-uniform states in heterogeneous systems for $k>3$. We present two general methods to construct $k$-uniform states in the heterogeneous systems for general $k$. The first construction is derived from the error correcting codes by establishing a connection between irredundant mixed orthogonal arrays and error correcting codes. We can produce many new $k$-uniform states such that the local dimension of each subsystem can be a prime power. The second construction is derived from a matrix $H$ meeting the condition that $H_{A\times \bar{A}}+H^T_{\bar{A}\times A}$ has full rank for any row index set $A$ of size $k$.
These matrix construction can  provide more flexible choices for the local dimensions, i.e., the local dimensions can be any integer (not necessarily prime power) subject to some constraints. Our constructions imply that for any positive integer $k$, one can construct $k$-uniform states of a heterogeneous system in many different Hilbert spaces.
 \end{abstract}
 %There are two constructions, construction I is derived from symmetric matrix and Construction II is derived from random matrix.
%\begin{abstract}
%A pure quantum state of $n$ parties and local dimension $d_i$ is called $k$-uniform if all the reductions to $k$-parties are maximally mixed. $k$-uniform sates play an important role in quantum information theory. In this paper, we present two constructions of $k$-uniform states in heterogeneous systems. The first construction by establishing a connection between irredundant mixed orthogonal arrays \end{abstract}
%\begin{keywords}quantum error correcting codes; k-uniform; classical error correcting codes;
%\end{keywords}

%\tableofcontents

\section{Introduction}

Quantum entanglement plays a fundamental important role in quantum information theory. It appears in many areas of quantum information theory including quantum communications~\cite{Be1993,Be1996}, quantum computing~\cite{Joz1997,Joz2002,Vir2005} and quantum key distribution~\cite{Koa2004}. Characterization of entanglement in multipartite quantum systems is an important open problem in quantum theory.
A pure multipartite quantum state of a system containing $n$ subsystems with local dimension $d$ is called $k$-uniform if all the reductions of the density matrix to $k$ subsystems are maximally mixed. Such states are also called maximally multipartite entangled states. From the perspective of information theory, a $k$-uniform state in the Hilbert space $(\CC^d)^{\otimes n}$ means that all information about the system is lost after removal of $n-k$ or more qudits.
%We have $k\le\lfloor\frac{n}{2}\rfloor$ due to the Schmidt decomposition. Particularly, a $\lfloor\frac{n}{2}\rfloor$-uniform state is called an absolutely maximally entangled (AME) states.
Recently, great progress has been made on the characterization and construction of $k$-uniform states in homogeneous systems \cite{Feng,Goy14,Goy15,Pang19}. They can be constructed from orthogonal arrays \cite{Goy14}, Latin squares \cite{Goy15}, symmetric matrices \cite{Feng}, classical error correcting codes \cite{Feng} and quantum error correcting codes \cite{Scott}.
%{ \bf [More introduction on results of homogeneous systems]}

However, the quantum system is more likely to be heterogeneous in a practical setup where the local dimensions of the quantum system are mixed. For example, the physical systems for encoding may have different numbers of energy levels. Recently, several experiments were done on  physical systems consist of subsystems of different numbers of levels. It has been experimentally shown that a three partite state composed of one qubit and two qutrits
can be generated in laboratory, which exhibits genuinely multipartite entanglement \cite{Ma16}. The heterogeneous systems enable one to implement quantum steering more efficiently. This indicated that multipartite entangled states of heterogeneous systems can be implemented in the near future.

An $n$ partite system $\CC^{d_1}\otimes \CC^{d_2}\otimes\cdots\otimes \CC^{d_n}$ is called heterogeneous if the local dimensions $d_i$ are not all equal for $i=1,2,\cdots,n$. We can consider homogenous systems as a special case of the  heterogeneous systems $\CC^{d_1}\otimes \CC^{d_2}\otimes\cdots\otimes \CC^{d_n}$ when $d_1=d_2=\cdots=d_n$. Similarly, the study of $k$-uniform states in  heterogeneous systems is of great interest. As mentioned in \cite{Pang19}, the higher the uniformity of the multipartite entangled states, the more advantages they can offer. Therefore, constructing $k$-uniform states with higher uniformity in heterogeneous systems becomes an interesting and hot topic recently \cite{Hu13, Sun15,Chen06,Yu08, Goy16, Shi, Pang}. However, the theory of quantum entanglement in heterogeneous systems is far from satisfactory. The study of entanglement for heterogeneous systems is much more challenging than that for the homogeneous case. The characterization of $k$-uniform sates in heterogeneous systems is quite hard and it lacks efficient mathematical tools. There are only few results known on $k$-uniform sates in heterogeneous systems.
%the higher the uniformity of the multipartite entangled states, the more advantages they offer. Remarkably, the subsystems of more than two levels can improve the security of some quantum information protocols [35] and enhance the capacity of quantum channels [36] and the efficiency of quantum gates [37]

\subsection{Known results}

In general, a $k$-uniform state do not exist if the product of the dimensions of the $k$ local
Hilbert spaces is larger than that of the complementary system. Thus, we can say that a necessary condition for the existence of a $k$-uniform state in a heterogeneous system $\CC^{d_1}\otimes \CC^{d_2}\otimes\cdots\otimes \CC^{d_n}$ is $d_1d_2\cdots d_k\le d_{k+1}\cdots d_n$. In a particular case when $d_i$ are all equal, this necessary condition reduces to $k\le \lfloor\frac{n}{2}\rfloor$ where $\lfloor\cdot\rfloor$ is the floor function.

There are some particular cases of quantum entanglement in heterogeneous systems that have been studied, such as in tripartite systems $\CC^2\otimes\CC^2\otimes\CC^t$ \cite{Yu08,Mi04} and $\CC^2\otimes\CC^{t_1}\otimes\CC^{t_2}$ \cite{Wang13} and four-partitle systems $(\CC^2)^{\otimes 3}\otimes\CC^t$ \cite{Chen06}. The characterization of entanglement in heterogeneous systems appears much more hard for the lack of suitable mathematical tools.
Recently, some constructions of $k$-uniform states in heterogeneous systems are given especially for small $k$, for example $k=1,2,3$.
 %In those constructions, mixed orthogonal arrays (MOAs, or called asymmetric orthogonal arrays) have been investigated
The relationship between orthogonal arrays and $k$-uniform sates in homogeneous systems has been well-studied \cite{Goy14,Shi20}. Mixed  orthogonal arrays (MOAs, or called asymmetric orthogonal arrays) are a natural generalization of orthogonal arrays (OAs) and can be applied to the construction of $k$-uniform sates in heterogeneous systems. In \cite{Goy16}, the authors established a link between mixed orthogonal arrays and
multipartite quantum states in heterogeneous systems and demonstrated the existence
of a wide range of highly entangled states in heterogeneous systems. In particular, they presented explicit constructions of $1$-uniform and $2$-uniform states in heterogeneous systems for any number of parties by introducing irredundant mixed orthogonal arrays (IrMOAs). A construction of irredundant mixed orthogonal arrays  from different schemes was given.
Later, a sufficient and necessary condition for the existence of $1$-uniform states was given in \cite{Br18}.
Recently, Shi et.al. studied the construction of $2$-uniform and $3$-uniform states from IrMOAs \cite{Shi}.
Besides, two methods of generating $(k-1)$-uniform states from $k$-uniform states were given.
In \cite{Pang}, the authors generalized difference scheme method and orthogonal partition method for constructing IrOAs  to IrMOAs. As a result, explicit constructions of $2$ and $3$-uniform states for arbitrary heterogeneous multipartite systems with coprime individual levels were given using IrMOAs.
%characterize entanglement states in heterogeneous systems consisting of subsystems with nonprime power dimensions

Though some interesting results have been given on $k$-uniform states in heterogeneous systems, these results are far from satisfactory. Many questions related to quantum entangled states in heterogeneous systems remain open. For example, it was pointed out in \cite{Goy16, Pang} that it is not known for which heterogeneous systems $k$-uniform states exists, especially for five partite systems and $k\ge 3$. In literature, few is known on $k$-uniform states for $k>3$.
 We list some of the known results on the existence of $k$-uniform states for small $k$ in Table 1 to our knowledge.
We can also observe that most of the known $k$-uniform states in heterogeneous systems are derived from IrMOAs. Therefore, it is desired to develop some new universal approaches for producing $k$-uniform states in heterogeneous systems.

\begin{table}[]
	\caption{\bf Some known results on existence of $k$-uniform states for small $k$}
\bigskip
	\centering
\begin{tabular}{||p{3cm}|p{8cm}|p{2cm}||} \hline\hline
  % after \\: \hline or \cline{col1-col2} \cline{col3-col4} ...
 $(\CC^d)^{\otimes t_1}\otimes (\CC^2)^{\otimes t_2}$ &Existence  & Reference\\
\hline
2-uniform & $d=3, t_2=1,t_1=7$ & \cite{Goy16}  \\
2-uniform & $d=3, t_2=1,t_1=13$  & \cite{Goy16}  \\
2-uniform & $d=5, t_2=1,t_1=11$  & \cite{Goy16}  \\
2-uniform & $d\ge 2, t_2=1,t_1\ge 5$ & \cite{Shi}  \\
      2-uniform &  $d\ge 2, t_2=2,t_1\ge 7$, $t_1\neq 4d+2,4d+3$   &  \cite{Shi} \\
     2-uniform & $d=3,3\le t_2\le 4$,$t_1\ge 7$   & \cite{Shi}  \\
       2-uniform & $d=3,5\le t_2\le 11$,$t_1\ge 6$    & \cite{Shi}  \\
       2-uniform & $d=5,3\le t_2\le 4$,$t_1\ge 7$,$t_1\neq 22,23$   & \cite{Shi}  \\
       2-uniform & $d=5,5\le t_2\le 8$,$t_1\ge 6$,$n\neq 22$    & \cite{Shi}  \\
       2-uniform & $d=5,9\le t_2\le 19$,$t_1\ge 6$,$t_1\neq 21,22$    & \cite{Shi}  \\
       2-uniform & $d=7,3\le t_2\le 5$,$t_1\ge 7$,$t_1\neq 30,31$   & \cite{Shi}  \\
       2-uniform & $d=7,6\le t_2\le 12$,$t_1\ge 6$,$t_1\neq 30$    &  \cite{Shi} \\
       2-uniform & $d=7,13\le t_2\le 27$,$t_1\ge 6$,$t_1\neq 29,30$    &  \cite{Shi} \\
       \hline
            3-uniform & $d>4$ odd prime power, $4\le t_1\le d$,$2d^2\le t_2\le 4d^2$,$t_2\ge 4d^2$ & \cite{Pang} \\
      3-uniform & $d=5,4+5\times 100^m\le t_2\le 6\times 100^m$,$0\le t_1\le 6^m$ & \cite{Shi} \\
  \hline
        $(\CC^3)^{\otimes t_1}\otimes (\CC^2)^{\otimes t_2}$ &Existence  & Reference \\
        \hline
        2-uniform & $t_1=1$,$t_2\ge 8$   &  \cite{Pang} \\
        2-uniform & $t_1=2$,$t_2\ge 12$   &  \cite{Pang} \\
        2-uniform & $t_1=3$,$t_2\ge 11$   &  \cite{Pang} \\
          2-uniform & $t_1=4$,$t_2\ge 10$   &  \cite{Pang} \\
          2-uniform & $1\le t_1\le 2^m 3$,$t_2\ge 2^{m-1}3^2+3$,$m\ge 2$   &  \cite{Pang} \\ \hline
           3-uniform & $t_1=5$,$t_2\ge 16$  &  \cite{Pang}  \\
           3-uniform & $t_1=4$,$t_2\ge 16$   &  \cite{Pang} \\
   3-uniform & $4+7\times 36^m\le t_2\le 9\times 36^m$,$0\le t_1\le 4^m$  & \cite{Shi}  \\
   \hline
   \hline
\end{tabular}
\end{table}
Among those $k$-uniform states, there is a special class of  $k$-uniform states which is called absolutely maximally entangled (AME) states that are of great interest.
Suppose $|\Phi\rangle$ is a $k$-uniform sate in $(\CC^d)^{\otimes n}$, then we have  $k\le \lfloor\frac{n}{2}\rfloor$ by the Schmidt decomposition. We call a $k$-uniform state an absolutely maximally entangled (AME) state when the equality is achieved, i.e., $k=\lfloor\frac{n}{2}\rfloor$.  AME states are important resources which can be used in threshold quantum secret sharing schemes \cite{H12}, parallel teleportation \cite{H13} and holographic quantum codes \cite{Pa15}. However, the existence of AME states are very rare for given local dimensions. In \cite{Scott}, it was shown that there is a one to one correspondence between the AME state in the homogeneous system  and quantum error correcting code. However, not all homogeneous systems contain AME states. For example, AME states of seven qubits do not exist \cite{Hu17}.

It is known that for a general system $\CC^{d_1}\times \CC^{d_2}$ with $d_1\neq d_2$, AME state do not exist. Therefore, AME states do not exist for heterogeneous systems consisting of $n$ subsystems where $n$ is even \cite{Scott}. In fact, for an AME state in a heterogeneous system $\CC^{d_1}\otimes \CC^{d_2}\otimes\cdots\otimes \CC^{d_n}$ where $d_i$ are not all equal, then $d_1\cdots d_{\lfloor\frac{n}{2}\rfloor}\le d_{\lfloor\frac{n}{2}\rfloor+1}\cdots d_n$ which implies that $n$ must be odd.
 In \cite{Shen20}, AME states in tripartite heterogeneous systems were studied. More precisely, the existence of AME states in tripartite heterogeneous systems were investigated and some AME states were constructed in such tripartite heterogeneous systems. They established the connection between heterogeneous AME states and multi-isometry matrices. Some nonexistence of AME states were given by computer search \cite{Shi}. For example, it was proved that AME states do not exist in $\CC^3\otimes (\CC^2)^{\otimes8}$ \cite{Shi}.  Generally speaking, few is known on the existence and nonexistence of AME states in heterogeneous systems.

%A mixed orthogonal array is called irrelevant if every subset of $N-k$ columns contains a different $N-k$ tuples in each row, which can be denoted as IrMOA.

%For brevity the normalization factors in front of pure states discussed in this work are omitted.

\subsection{Our main results and techniques}
A key problem in characterizing entanglement in heterogeneous systems is the lack of suitable
mathematical tools. For example, Galois fields do not exist in non-prime dimensions. This makes the
study of entanglement in heterogeneous systems more challenging than that in homogeneous systems.
It is also important to note that, it is still not possible to determine in
which heterogeneous systems $k$-uniform states exist.

In this paper, we present two general constructions of $k$-uniform states in heterogeneous systems. The first construction makes use of the connection between IrMOAs and $k$-uniform states. By giving a general construction of IrMOAs from error correcting codes, we are able to produce families of $k$-uniform states in heterogeneous systems where each level is a prime power.
To instantiate this construction, we consider  generalized algebraic geometry codes whose components are the evaluations of a function at  places of degree $m_i$. For a function $f\in \cL(G)$, an algebraic geometry code usually encode a message to $(f(P_1),\ldots,f(P_n))$ where $P_1,\cdots,P_n$  are $n$ distinct rational points. In our construction, we choose $n$
 places $P_1,\cdots,P_n$ with degree $m_i$ and evaluate each function on $P_1,\cdots,P_n$ such that $f(P_i)\in \F_{q^{m_i}}$. Thus we obtain a generalized algebraic geometry code over mixed alphabets. Furthermore, we give  a lower bound on the minimum distance of the generalized algebraic geometry code and its dual code.
%To simplify our argument, let us take Reed-Solomon code as an example. Given a polynomial $f(x)\in \F_q$, the encoded message is $(f(\alpha_1),\ldots,f(\alpha_q))$ with $\alpha_i\in \F_q$. In our construction, we want each level to be a different prime power. Thus, we let $\alpha_i\in \F_{q^{a_i}}$. This allows us to construct a Reed-Solomon code over a mixed alphabet.
%We also investigate the connection between $k$-uniformity and the distance and dual distance of algebraic geometry codes under this setting.
The second construction is totally different from the existing tools for producing $k$-uniform states in heterogeneous systems. Inspired by the symmetric matrix construction of $k$-uniform states in homogeneous systems in \cite{Feng}, we generalized the idea of using symmetric matrix for the construction of $k$-uniform states in heterogeneous systems. We consider an $n$-qudit state $|\Phi\rangle=\frac{1}{\sqrt{D}}\sum_{\bc\in G }\phi(\bc)|\bc\rangle$ in $\bigotimes_{i=1}^n\CC^{d_i}$  with $\phi(\bc)=\zeta_D^{\bc H\bc^T}$ where $D=$lcm$(d_1,\ldots,d_n)$. We manage to show that if the submatrix $H_{A\times \bar{A}}+H_{\bar{A}\times A}^T$ of $H$ has full row rank for each $k$-subset $A\subseteq \{1,\ldots,n\}$ and $\bar{A}=\{1,\ldots,n\}\setminus A$, then this $n$-qudit state is $k$-uniform. If $H$ is symmetric, this condition can be simplified.
We present two constructions based on this method, i.e., Construction I and Construction II.
Construction I is obtained from symmetric matrices. More explicitly, we give examples of $1$-uniform states and AME states. Construction II is obtained from random matrices which proceeds in two steps.
We first assume $d_i=p^{\alpha_i}$ and map $\ZZ_{d_i}$ to $\ZZ_p^{\alpha_i}$. The merit of this map is that the matrix under this map is defined over prime field $\ZZ_p$. This greatly simplifies the rank calculation of a matrix. Then, this construction is extended to any number $d_i=\prod_{j=1}^t p_j^{\alpha_{ij}}$. To do this, we decompose this case into $t$ prime cases $\ZZ_{p_j}^{\alpha_{ij}}$. By Chinese Reminder Theorem, we merge them together and obtain a matrix over $\ZZ_d$ with $d=p_1p_2\cdots p_t$. Construction II provides a rather flexible parameter ranges for the $k$-uniformity. The details of the construction can be found in Section 4.

%The first construction can give many highly entangled quantum states where some of them are AME states....

\subsection{Organization of this paper}
This paper is organized as follows. In Section 2, we briefly introduce some definitions and notations on $k$-uniform states and function fields. In Section 3, we define codes over mixed alphabets and their dual codes. The relationship between codes over mixed alphabets and  $k$-uniform stats in heterogeneous systems is given. Then, we present an explicit construction of  $k$-uniform stats in heterogeneous systems from generalized algebraic geometry codes. Finally, we give a universal construction of $k$-uniform stats in heterogeneous systems by using matrices in Section 4. We present two constructions in Section 4. Construction I is obtained from symmetric matrices and Construction II is obtained from random matrices. We conclude our results in Section 5.

\section{Preliminaries}
In this section we introduce some notations and definitions on $k$-uniform states, function fields and algebraic geometry codes.

\subsection{$k$-uniform states in heterogeneous systems}
We denote a $d$-dimensional space by $\CC^d$. In a general setting, an  $n$-partite quantum system is associated with the Hilbert space $\CC^{d_1}\otimes \CC^{d_2}\otimes\cdots\otimes \CC^{d_n}$  where $d_{1},\ldots,d_{n}\ge 2$ are integers. If the local dimensions $d_i$ are equal for $i=1,2,\cdots,n$, then we call it a  homogeneous system, denoted as $(\CC^d)^{\otimes n}=\CC^d\otimes \cdots\otimes \CC^d$. If the local dimensions are mixed, i.e., $d_i$ are not all equal for $i=1,2,\cdots,n$, then we call it a heterogeneous  system $\CC^{d_1}\otimes \CC^{d_2}\otimes\cdots\otimes \CC^{d_n}$.  We assume  $d_1\ge d_2\ge \cdot\cdot\cdot\ge d_n\ge 2$ in such a heterogeneous system.

Denote $[n]$ as the set of integers $\{1,2,\cdots,n\}$. Denote $\binom{[n]}{k}$ as the collections of all $k$-subsets in $[n]$.
%We will refer such as heterogeneous system as $d_1\otimes d_2\otimes\cdot\cdot\cdot\otimes d_n$.
Denote by $\gcd(d_1,\ldots,d_n)$ the greatest common divisors of $d_1,\ldots,d_n$ and by lcm$(d_1,\ldots,d_n)$ the least common multiple of $d_1,\ldots,d_n$.
Let $\{|\ba\rangle:\ba\in \ZZ_{d_i}=\mathbb{Z}/d_i\ZZ=\{0,1,\ldots,d_{i}-1\}\}$ be a standard orthogonal basis. Denote $G= \ZZ_{d_1}\oplus \ldots \oplus \ZZ_{d_n}=\bigoplus_{i\in [n]}\ZZ_{d_i}$.
%Suppose $A$ is a subset of $[n]$, denote the complement of $A$ as $\bar{A}=[n]\setminus A$.
For any subset $A\subset \left[ n\right] =\left\{ 1,2,\ldots ,n\right\} $, let $G_A=\bigoplus_{i\in A} \ZZ_{d_i}$.
%$H=\left( \mathbb{C}^{d_1}\otimes \ldots \otimes \mathbb{C}^{d_2}\right)$,$H_A=\otimes\limits_{i\in A} \mathbb{C}^{d_i}$ ¡£
%\textcolor{red}{\bf  For the function $G\rightarrow\mathbb{C}$£¬$c\rightarrow\phi(c)$¡£
%The density matrix of $|c>$
%$$\rho=|\Phi><\Phi|=\sum\limits_{c,c'\in G}\phi(c)\bar{\phi}(c')|c><c'>$$  }
A pure quantum state in $\CC^{d_1}\otimes \CC^{d_2}\otimes\cdots\otimes \CC^{d_n}$  can be represented as $|\Phi\rangle=\sum_{\bc\in G}\phi(\bc)|\bc\rangle$.
The density matrix of a quantum state  $|\Phi\rangle=\sum_{\bc\in G}\phi_{\bc}|\bc\rangle$ is defined by $\rho:=|\Phi\rangle\langle\Phi|=\sum_{\bc,\bc'\in G}\phi_{\bc}\bp_{\bc'}|\bc\rangle\langle\bc'|$. For a subset $A$ of $[n]$ and a vector $\bc\in G$, we denote by $\bc_A$ the projection of $\bc$ at $A$.  The reduction of $|\Phi\rangle$ to $A$ has the density matrix $\rho_A:=\sum_{\bc,\bc'\in G}\phi_{\bc}\bp_{\bc'}\langle\bc_{\cA}|\bc'_{\cA}\rangle|\bc_A\rangle\langle\bc'_A|$, where $\cA$ is the complement set of $A$ (i.e., $\cA=\{1,2,\dots,n\}\setminus A$) and $\langle\bc_{\cA}|\bc'_{\cA}\rangle$ is defined to be $1$ if $\bc_{\cA}=\bc'_{\cA}$ and $0$ otherwise.

%\begin{defn}\label{defn:1} {\rm A pure quantum state  $|\Phi\rangle=\sum_{\bc\in G}\phi_{\bc}|\bc\rangle$ is called $k$-uniform if for any $k$-subset $A$ of $\{1,2,\dots,n\}$, the reduction of $|\Phi\rangle$ to $A$ has the density matrix $\rho_A=\frac{1}{d^k}\sum_{\bc_A\in G_A}|\bc_A\rangle\langle\bc_A|$.}\end{defn}

Thus the reduction of the density matrix
$\rho$ to its subsystem indexed by $A$ is
\begin{eqnarray}
\rho_A&:=&\sum\limits_{\bc,\bc'\in G}\phi(\bc)\bar{\phi}(\bc')\langle\bc_{\bar{A}} | \bc'_{\bar{A}}\rangle|\bc_A
\rangle\langle \bc'_A|\\
&=&\sum\limits_{\bc_A,\bc'_A\in G_A}\left(\sum\limits_{\bc_{\bar{A}},\bc'_{\bar{A}}\in G_{\bar{A}}}\phi(\bc_A,\bc_{\bar{A}})\bar{\phi}(\bc'_A,\bc'_{\bar{A}})\langle \bc_{\bar{A}} | \bc'_{\bar{A}}\rangle|\bc_A\rangle\langle \bc'_A|\right)\\
&=&\sum\limits_{\bc_A,\bc'_A\in G_A}\left(\sum\limits_{\bc_{\bar{A}}\in G_{\bar{A}}}\phi(\bc_A,\bc_{\bar{A}})\bar{\phi}(\bc'_A,\bc_{\bar{A}})|\bc_A\rangle\langle \bc'_A|\right)
\end{eqnarray}

Now we are ready to give a definition of $k$-uniform states.
\begin{defn}\label{defn:1}
For $k\geq1$, a pure quantum state  $|\Phi\rangle=\sum_{\bc\in G}\phi_{\bc}|\bc\rangle$ is called $k$-uniform, if for any $ A\subseteq[n]$ with $|A|=k$, $\rho_A=\frac{1}{d_A}\sum\limits_{\bc_A\in G_A}|\bc_A\rangle\langle \bc_A|$, and $d_A=\prod\limits_{i\in A}d_i=|G_A|$, i.e.,

\[\sum\limits_{\bc_{\bar{A}}\in G_{\bar{A}}}\phi(\bc_A,\bc_{\bar{A}})\bar{\phi}(\bc'_A,\bc_{\bar{A}})=
\begin{cases}
0& \mbox{if $ \bc_A\neq \bc'_A$} \\
1/d_A& \mbox{if $\bc_A=\bc'_A$}
\end{cases}
(\forall \bc_A,\bc'_A\in G_A)
\]

\end{defn}

It is well known that $k\le\lfloor\frac{n}{2}\rfloor$ due to the Schmidt decomposition. If  $k=\lfloor\frac{n}{2}\rfloor$, then the quantum state  $|\Phi\rangle$ is called an absolutely maximally entangled (AME) state.

Orthogonal arrays have wide applications and  have been used a lot to construct $k$-uniform states in homogenous systems. Similarly, as a generalization of orthogonal arrays, mixed orthogonal arrays have been applied to the construction of $k$-uniform states in heterogeneous systems and have turned out to be a powerful and convenient tool.
\begin{defn}
A mixed orthogonal array (MOA)  $MOA(r,(d_1,d_2,\ldots,d_n),k)$ is an array of $r$ rows and $n$ columns satisfying the following conditions
\begin{enumerate}
  \item The $j$-th column has symbols from the set $S_j=\{0,1,2,\cdots,d_j-1\}$.
  \item For any subset $A=(j_1,\ldots,j_k)\subseteq(1,\ldots,n)$, $j_1<\cdots<j_k$ with size $k$, every possible $k$-tuple occurs the same number of times as a row in the subarray where the columns are indexed by $A$, i.e., each element in $S_{j_1}\times\ldots\times S_{j_k}$ appears $\frac{r}{d_{j_1}d_{j_2}\cdots d_{j_k}}$ times.
\end{enumerate}
 Furthermore, a mixed orthogonal array is called irredundant if every subset of $n-k$ columns contains a different sequence of $n-k$ symbols in every row, denoted as $IrMOA(r,(d_1,d_2,\ldots,d_n),k)$.
\end{defn}
%Each $d_i$ is called a level.
It is obvious to see that if $d_1=d_2=\cdots=d_n$, then a mixed orthogonal array is reduced to an orthogonal array. However, it is not true that any MOA can be used to construct a $k$-uniform state in heterogenous system. The relationship between irredundant mixed orthogonal array and $k$-uniform states is characterized  in \cite{Goy16}. It is worthy to note that in general it is not easy to check whether an MOA is an IrMOA.

\begin{thm}\cite{Goy16}\label{thm:2.1}
Let $(m_{ij})_{1\leq i\leq r,1\leq j\leq n}$ be an $IrMOA(r,(d_1,\ldots ,d_n),k)$, then
$|\Phi\rangle=\frac{1}{\sqrt{r}}\sum\limits_{i=1}^{r}|m_{i1}\cdots m_{in}\rangle$ is a $k$-uniform state in $\mathbb{C}^{d_1}\otimes \ldots \otimes \mathbb{C}^{d_n}$.
\end{thm}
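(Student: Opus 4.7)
The plan is to verify Definition~\ref{defn:1} directly. Write $\phi(\bc)=1/\sqrt{r}$ when $\bc=(m_{i1},\ldots,m_{in})$ is a row of the IrMOA and $\phi(\bc)=0$ otherwise. First I would observe that the irredundancy hypothesis forces the $r$ rows of the array to be pairwise distinct: if two rows agreed entirely, then in particular they would agree on any $(n-k)$-subset of columns, contradicting irredundancy. In particular $\langle\Phi|\Phi\rangle=\frac{1}{r}\sum_{i,i'}\delta_{ii'}=1$, so $|\Phi\rangle$ is a legitimate unit vector.

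Fix $A\in\binom{[n]}{k}$ and $\bc_A,\bc'_A\in G_A$. By Definition~\ref{defn:1}, the task is to evaluate
\[
S(\bc_A,\bc'_A)\;:=\;\sum_{\bc_{\bar A}\in G_{\bar A}}\phi(\bc_A,\bc_{\bar A})\,\bar\phi(\bc'_A,\bc_{\bar A}).
\]
Only those $\bc_{\bar A}$ for which \emph{both} $(\bc_A,\bc_{\bar A})$ and $(\bc'_A,\bc_{\bar A})$ are rows of the array can contribute, and each such term equals $1/r$.

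For the diagonal case $\bc_A=\bc'_A$, the sum $S(\bc_A,\bc_A)$ is $1/r$ times the number of rows whose restriction to the columns in $A$ equals $\bc_A$. The MOA property applied to the $k$-subset $A$ says that every element of $G_A$ appears exactly $r/d_A$ times as an $A$-projection of a row; since the rows themselves are distinct, this count coincides with the number of admissible $\bc_{\bar A}$. Hence $S(\bc_A,\bc_A)=(r/d_A)\cdot(1/r)=1/d_A$, as required.

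For the off-diagonal case $\bc_A\neq\bc'_A$, suppose $S(\bc_A,\bc'_A)\neq 0$. Then there is some $\bc_{\bar A}$ with both $(\bc_A,\bc_{\bar A})$ and $(\bc'_A,\bc_{\bar A})$ appearing as rows of the IrMOA; these are two distinct rows (because $\bc_A\neq\bc'_A$) that nevertheless agree on the $(n-k)$-subset $\bar A$ of columns. This directly contradicts the irredundancy condition, which requires the projections onto any $n-k$ columns to be distinct across all rows. Therefore $S(\bc_A,\bc'_A)=0$. Combining the two cases matches Definition~\ref{defn:1} exactly, proving that $|\Phi\rangle$ is $k$-uniform.

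The proof is essentially bookkeeping rather than technically deep; the only step worth pausing on is recognizing that the ``irredundant'' part of the IrMOA axioms is precisely what kills the off-diagonal terms, whereas the usual MOA balance axiom is what produces the correct diagonal normalization $1/d_A$.
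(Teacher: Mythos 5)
Your proof is correct and complete. Note that the paper itself states this theorem only as a citation to \cite{Goy16} and gives no proof, so there is nothing to compare against; your direct verification of Definition~\ref{defn:1} --- using the MOA balance condition for the diagonal terms and irredundancy (plus the resulting distinctness of rows) to kill the off-diagonal terms and fix the normalization --- is exactly the standard argument and fills that gap correctly.
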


Therefore, as long as one can construct an irredundant mixed orthogonal array, then one can produce a $k$-uniform state. For example, the following matrix
\iffalse
\begin{equation*}
\left(
                                                                               \begin{array}{ccc}
                                                                                 0 & 0&0 \\
                                                                                 1 & 0&1 \\
                                                                                 2& 1&0 \\
                                                                                 3 & 1&1 \\

                                                                               \end{array}
                                                                             \right).
\end{equation*}
is an $IrMOA(4,(4,2,2),1)$. Then we can construct a $1$-uniform state $|\Phi\rangle=\frac{1}{2}(|000\rangle+|101\rangle+|210\rangle+|311\rangle)$
in $\CC^4\otimes\CC^2\otimes\CC^2$.
\fi

\begin{equation*}
\left(
                                                                               \begin{array}{ccccc}
                                                                                 0 & 0&0& 0&0 \\
                                                                                 0 & 1&2& 1&1 \\
                                                                                 1& 1&1& 2&0 \\
                                                                                1 & 2&0& 0&1 \\
                                                                               2& 2&2& 1&0 \\
                                                                                2 & 0&1& 2&1 \\

                                                                               \end{array}
                                                                             \right).
\end{equation*}
is an $IrMOA(6,(3,3,3,3,2),1)$. This is a genuinely mixed orthogonal array.
Then we can construct a $1$-uniform state $|\Phi\rangle=\frac{1}{\sqrt{6}}(|00000\rangle+|01211\rangle+|11120\rangle+|12001\rangle+|22210\rangle+|20121\rangle)$
in $(\CC^3)^{\otimes 4}\otimes\CC^2$.
%\subsection{Background on finite fields}

\subsection{Backgrounds on function fields}

As a preparation, we recall some basic results on algebraic geometry (AG) codes first. Readers may refer to \cite{St} for more details.

%$F/\F_q$ be a function field
Let $\cX$ be a smooth, projective, absolutely irreducible curve of genus $\g(\cX)$ (we will use $\g$ instead of $\g(\cX)$ if there is no confusion in the context) defined over $\F_q$. We denote by $F:=\F_q(\cX)$ the function field of $\cX$ over $\F_q$. An element of $F$ is called a function. A place $P$ of $F$ is the maximal ideal in a valuation ring $O$ of $F/\F_q$. The residue field $O/P$ is isomorphic to  an extension field over $\F_{q}$. The degree of  $P$ is defined to be $[O/P:\F_q]$. We denote $\PP_F$ as the set of all places of $F$.
%For a place $P$ of $F$, let $O_p$ be the integral ring of $P$, and $\cP$ be the maximal ideal of $O_p$. Thus for a function $f\in F$ with $v_P(f)\geq0$, we denote by $f(P)$ the residue $\bar{f}$ of f in ${O_P}/{\cP}$.
Let $\{P_1,...,P_n\}$ be a set of  $n$ distinct  places over $\F_q$ where $\deg(P_i)=m_i$  for $i=1,\ldots,n$. A place of degree one is called rational. A divisor $G$ of $F$ is of the form $\sum_{P\in \PP_F}m_pP$ with finitely many nonzero coefficients $m_P$. The degree of $G$ is defined by  $\deg(G)=\sum_{P\in\PP_F}m_p\deg(P)$. The support of $G$, denoted by  $\supp(G)$, is defined to be $\{P\in\PP_F: \; m_P\neq 0\}$. Thus $\supp(G)$ is a finite set. Let $G$ be a divisor of degree $m$  such that ${\rm Supp}(G)\cap\{P_1,\cdots,P_n\}=\emptyset$.
% The normalized discrete valuation corresponding to a place $P$  is written as $v_P$.

\iffalse
For a divisor $G$, one can define two spaces
\[\mL(G)=\{f\in F\setminus\{0\}:\quad (f)+G\geq0\}\cup\{0\}\]
and
\[\Omega(G)=\{\eta\in\Omega_F\setminus\{0\}:\; (\eta)\ge G\}\cup\{0\}.\]
Then both $\mL(G)$ and $ \Omega(G)$ are finitely dimensional spaces over $\F_q$. Dimensions of $\mL(G)$ and $ \Omega(G)$ are denoted by $\ell(G)$ and $i(G)$, respectively. One has the  identity $i(G)=\ell(K-G)$, where $K$ is a canonical divisor. Then the Riemann-Roch theorem says that
\[\ell(G)=\deg(G)-\g+1+i(G)=\deg(G)-\g+1+\ell(K-G),\]
where $K$ is a canonical divisor. Consequently $\ell(G)\ge \deg(G)-\g+1$ and the equality holds if $\deg(G)>2\g-2$.
\fi

For a divisor $G$, the Riemann-Roch space associated to $G$ is
\[\mL(G)=\{f\in F\setminus\{0\}:\; (f)+G\geq0\}\cup\{0\}.\]
Then $\mL(G)$ is a finite-dimensional vector space over $\F_q$ and we denote its dimension by $\ell(G)$. By the Riemann-Roch theorem we have
\[\ell(G)\geq \deg(G)+1-g,\]
where the equality holds if $\deg(G)\geq 2g-1$.

Now we are ready to give the definition of algebraic geometry (AG) codes. Let $P_1,\dots,P_n$ be pairwise distinct rational places of $F$ and let $D=P_1+\dots+P_n$. Choose a
divisor $G$ of $F$ such that ${\rm supp}( G)\bigcap {\rm supp}( D)=\emptyset$. Then $\nu_{P_i}(f)\geq0$ for all $1\leq i\leq n$ and for any $f\in \mL(G)$. Consider the map

\[\Psi:\mL(G)\rightarrow \F_q^n,\quad f\mapsto(f(P_1),\dots,f(P_n)).\]
Obviously the image of  $\Psi$ is a $q$-ary linear code, denoted by $C_\cL(D,G)$.  $C_\cL(D,G)$ is usually called the functional algebraic-geometry code. If $\deg(G)<n$, then
$\Psi$ is an embedding and we have $\dim(C_\cL(D,G))=\ell(G)$.
By the Riemann-Roch theorem we can estimate the parameters of an AG code
(see \cite[Theorem 2.2.2]{St}).

\begin{lemma}\label{lem:2.1} $C_\cL(D,G)$ is an $[n,k,d]$-linear code over $\F_q$ with parameters
\[k=\ell(G)-\ell(G-D),\quad  d\geq n-\deg(G).\]
\begin{itemize}
\item[{\rm (a)}] If $G$ satisfies $\g\leq \deg(G)<n$, then
\[k=\ell(G)\geq \deg(G)-\g+1.\]
\item[{\rm (b)}] If additionally $2\g-2<\deg(G)<n$, then $k=\deg(G)-\g+1$.
\end{itemize}
\end{lemma}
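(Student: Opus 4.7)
The plan is to study the $\F_q$-linear evaluation map $\Psi:\mL(G)\to\F_q^n$ defined just above the lemma, whose image is by construction the code $C_\cL(D,G)$. The dimension formula $k=\ell(G)-\ell(G-D)$ will come from rank-nullity once $\ker(\Psi)$ is identified with $\mL(G-D)$; the distance bound $d\geq n-\deg(G)$ will come from the observation that a nonzero codeword of low weight forces the corresponding function to lie in a Riemann-Roch space of negative degree, which must be zero.

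First I will compute $\ker(\Psi)$. Since $\supp(G)\cap\supp(D)=\emptyset$, every $f\in\mL(G)$ is regular at each $P_i$, so $f(P_i)$ is defined, and $f(P_i)=0$ is equivalent to $\nu_{P_i}(f)\geq 1$. Doing this at all $i$ simultaneously translates $\Psi(f)=0$ into the divisorial inequality $(f)\geq D-G$, i.e.\ $f\in\mL(G-D)$. Conversely any $f\in\mL(G-D)$ obviously vanishes at every $P_i$. Rank-nullity then yields $k=\ell(G)-\ell(G-D)$.

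Next I will bound the distance. Let $f\in\mL(G)\setminus\{0\}$ with $\Psi(f)$ of Hamming weight $w$, and let $E$ be the sum of the $n-w$ places $P_i$ at which $f$ vanishes. Repeating the local argument above at exactly these places shows $f\in\mL(G-E)$, and a nonzero function can lie in $\mL(H)$ only when $\deg(H)\geq 0$: otherwise $\deg((f))=0$ combined with $(f)\geq -H$ forces $\deg(H)\geq 0$, a contradiction. Therefore $\deg(G)\geq\deg(E)=n-w$, i.e.\ $w\geq n-\deg(G)$, which is the claimed bound.

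For parts (a) and (b) the hypothesis $\deg(G)<n=\deg(D)$ gives $\deg(G-D)<0$, so by the same vanishing observation $\mL(G-D)=\{0\}$ and $k=\ell(G)$. The inequality $\ell(G)\geq\deg(G)-\g+1$ is the Riemann-Roch bound recalled just before the lemma; the hypothesis $\deg(G)\geq\g$ merely forces its right-hand side to be positive, guaranteeing a non-trivial code. Part (b) sharpens (a): once $\deg(G)>2\g-2$ Riemann-Roch holds with equality, and $k=\deg(G)-\g+1$ follows immediately. The whole argument is bookkeeping with divisors plus one invocation of Riemann-Roch and has no real obstacle; the only step requiring genuine care is the bidirectional translation between pointwise vanishing $f(P_i)=0$ and the valuation inequality $\nu_{P_i}(f)\geq 1$, which is legitimate precisely because $\supp(G)\cap\supp(D)=\emptyset$ places each $f\in\mL(G)$ in the local ring at $P_i$.
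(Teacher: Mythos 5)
Your proof is correct. The paper itself does not prove this lemma; it simply quotes it from Stichtenoth (\cite[Theorem 2.2.2]{St}), and your argument is precisely the standard textbook one: identify $\ker(\Psi)$ with $\mL(G-D)$ via the equivalence of $f(P_i)=0$ with $\nu_{P_i}(f)\ge 1$ (legitimate because $\supp(G)\cap\supp(D)=\emptyset$), apply rank--nullity for the dimension, use the fact that $\mL(H)=\{0\}$ when $\deg(H)<0$ for both the distance bound and the vanishing of $\ell(G-D)$ under $\deg(G)<n$, and finish with Riemann--Roch for (a) and (b). No gaps.
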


The differential space of $F$ is defined to be
\[\Omega_F:=\{fdx:f\in F\},\]
%where $\nu_Q(x)$ is coprime with $q$ for some place $Q$.
This is a one-dimensional space over $F$. For a  place $P$ and a function $t$ with $\nu_P(t)=1$, we define $\nu_P(fdt)=\nu_P(f)$. The divisor associated with a nonzero differential $\eta$ is defined to be ${\rm div}(\eta)=\sum_{P\in\PP_F}\nu_P(\eta)P$. Such a divisor is called a canonical divisor. Every  canonical divisor has degree $2\g-2$. Furthermore, any two canonical divisors are equivalent.  Now, if $P$ is a rational place and $\nu_P(fdt)\ge -1$,  we define the residue of $fdt$ at $P$ to be $(ft)(P)$, denoted by $\res_{P}(fdt)$.

 For a divisor $G$, we define the $\F_q$-vector space
\[\Omega(G)=\{\eta\in \Omega_F\setminus\{0\}:\;{\rm div}(\eta)\ge G\}\cup\{0\}\]
and denote the dimension of $\Omega(G)$ by $i(G)$. Then one has the following relationship
\[i(G)=\ell(K-G),\]
where $K$ is a canonical divisor.

 We define another algebraic geometry code $C_\Omega(D,G)$ as
\[C_\Omega(D,G)=\{(\res_{P_1}(\eta),\dots,\res_{P_n}(\eta)):\;\eta\in
\Omega(G-D)\}.\]
%where $\res_{P_i}(\omega)$ stands for the residue of $\omega$ at $P_i$.
%The algebraic geometry code $C_\Omega(D,G)$ is usually called a residual algebraic geometry code or differential algebraic geometry code.
It is well known that $C_\Omega(D,G)$ is the Euclidean dual of  $C_\cL(D,G)$.
We have the following results \cite[Theorem 2.2.7 and Proposition 2.2.10]{St}.
\begin{lemma}\label{lem:2.2}  $C_\Omega(D,G)$ is an $[n,k^{\perp},d^{\perp}]$-linear code over $\F_q$ with parameters
\[k^{\perp}=i(G-D)-i(G),\quad  d^{\perp}\geq \deg( G)-(2\g-2).\]
\begin{itemize}
\item[{\rm (a)}] If $G$ satisfies $2\g-2<\deg(G)<n$, then
\[k^{\perp}= n-\deg(G)+\g-1.\]
%\item[{\rm (b)}] There exists a nonzero differential $\eta\in\Omega_F$ such that $\nu_{P_i}(\eta)=-1$ for $i=1,\dots,n$ and $C_\Omega(D,G)=\bv*C_L(D,D-G+{\rm div}(\eta))$ for some $\bv\in(\F_q^*)^n$.
\end{itemize}
\end{lemma}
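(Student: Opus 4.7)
The plan is to analyze the residue map $\Omega(G-D)\to \F_q^n$ defined by $\eta\mapsto(\res_{P_1}(\eta),\dots,\res_{P_n}(\eta))$, whose image is $C_\Omega(D,G)$. The tools I will use are the definition of $\Omega(H)$, the identity $i(H)=\ell(K-H)$ for a canonical divisor $K$, and the Riemann-Roch theorem; the minimum distance argument will exploit that a differential with too few poles is forced to be zero.

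For the dimension, I would compute the kernel of the residue map. Since $\supp(G)\cap\supp(D)=\emptyset$, for every $1\le i\le n$ we have $\nu_{P_i}(G-D)=-1$, so any $\eta\in\Omega(G-D)$ has at worst a simple pole at $P_i$ and $\res_{P_i}(\eta)$ is well-defined. The residue vanishes exactly when $\nu_{P_i}(\eta)\ge 0$. Combined with $\nu_{P}(\eta)\ge \nu_P(G-D)=\nu_P(G)$ at every other place (there $\nu_P(D)=0$), the condition that all $n$ residues vanish is equivalent to $\nu_P(\eta)\ge \nu_P(G)$ for every place $P$, i.e.\ $\eta\in\Omega(G)$. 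Hence the kernel equals $\Omega(G)$ and $k^\perp=i(G-D)-i(G)$.

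For the distance bound, let $c=(\res_{P_1}(\eta),\dots,\res_{P_n}(\eta))$ be a nonzero codeword of weight $w$ and set $I=\{i:\res_{P_i}(\eta)=0\}$ with $|I|=n-w$. Applying the residue argument place by place on $I$ shows $\nu_{P_i}(\eta)\ge 0$ for $i\in I$, hence $\eta\in\Omega\bigl(G-D+\sum_{i\in I}P_i\bigr)$. Since this space contains the nonzero differential $\eta$, the divisor $G-D+\sum_{i\in I}P_i$ must satisfy $\deg\bigl(G-D+\sum_{i\in I}P_i\bigr)\le 2\g-2$ (otherwise $\Omega$ of it is zero). Computing, $\deg(G)-n+(n-w)=\deg(G)-w\le 2\g-2$, which rearranges to $w\ge \deg(G)-(2\g-2)$.

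For part (a), I would apply Riemann-Roch to both pieces of $k^\perp$. The hypothesis $\deg(G)>2\g-2$ gives $\deg(K-G)<0$, hence $\ell(K-G)=0$ and therefore $i(G)=0$. On the other hand, $\deg(K-G+D)=2\g-2-\deg(G)+n>2\g-2$ since $\deg(G)<n$, so Riemann-Roch yields $\ell(K-G+D)=\deg(K-G+D)-\g+1=n-\deg(G)+\g-1$, i.e.\ $i(G-D)=n-\deg(G)+\g-1$. Combining these gives $k^\perp=n-\deg(G)+\g-1$. The only place where care is needed is checking that the residue map is genuinely well-defined on all of $\Omega(G-D)$ (via $\nu_{P_i}(G-D)=-1$) and that the kernel computation correctly captures $\Omega(G)$; the rest is a direct application of Riemann-Roch.
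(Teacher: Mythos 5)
Your proof is correct. The paper does not actually prove this lemma---it quotes it directly from Stichtenoth (\cite[Theorem 2.2.7 and Proposition 2.2.10]{St})---and your argument is a faithful, complete rendition of that standard proof: the kernel of the residue map is identified with $\Omega(G)$ via the fact that, at a rational place where $\nu_{P_i}(\eta)\ge -1$, the residue vanishes exactly when $\nu_{P_i}(\eta)\ge 0$; the distance bound follows from $\deg\,\mathrm{div}(\eta)=2\g-2$; and part (a) is Riemann--Roch applied to $K-G$ and $K-G+D$. No gaps.
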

To distinguish two classes of algebraic geometry codes $C_\cL(D,G)$ and $C_\Omega(D,G)$, we usually call $C_\cL(D,G)$ a functional AG code and $C_\Omega(D,G)$ a differential or residual AG code.

\section{Construction from error correcting codes}
It is known that there is  a close relationship between $k$-uniform states in homogeneous systems and linear codes \cite{Feng}.
In this section, we generalize the idea in \cite{Feng} to construct $k$-uniform states in heterogeneous systems by defining codes over mixed alphabets. In fact, codes over mixed alphabets have been studied in \cite{Feng06}. In this section, we define the dual code of the code and show some of the probabilities. Furthermore, we define the generalized algebraic geometry codes and give lower bounds on the distance. It turns out that our codes can produce IrMOAs and therefore can give $k$-uniform states.

\subsection{A general construction from codes }
We first present a general construction of $k$-uniform states in heterogenous systems from error correcting codes. We first define a code $C$ over mixed alphabets, and its dual code $C^{\perp}$. By investigating the properties of the generator matrix of $C$, we can see that one can construct an IrMOA from the generator matrix of $C$ with some restrictions on its distance and the dual distance. Using the relationship between IrMOA and $k$-uniform states in Theorem \ref{thm:2.1}, one can obtain $k$-uniform states in heterogenous systems.

Let $q$ be a power of a prime number $p$ and $\F_q$ be the finite field with $q$ elements.
Let $m_1,m_2,\ldots,m_n$ be $n$ integers with $m_i\geq1$ for $1\le i\le n$.
%Denote $[n]$ as the set of integers $\{1,2,\cdots,n\}$.
Let $A$ be a subset of $[n]$, for a vector $\bv=(v_1,\cdots,v_n)$, denote the projection of the vector $\bv$ at $A$ as $\bv_A=(v_i)_{i\in A}$.  Assume that $\F_{q^{m_i}}/\F_q$ is a field extension with $[\F_{q^{m_i}}:\F_q]=m_i$.

We denote by $\Tr_i$ the trace function from $\mathbb{F}_{q^{m_i}}$ to $\mathbb{F}_{q}$, i.e.,
\[\Tr_i:\alpha\rightarrow\alpha+\alpha^q+\cdots+\alpha^{q^{m_i-1}}.\]
For any given $\F_q$-basis $\{\alpha_{i_1},\ldots,\alpha_{im_i}\}$ of $\mathbb{F}_{q^{m_i}}$, it is well known that there exists a dual $\F_q$-basis $\{\beta_{i_1},\ldots,\beta_{im_i}\}$ (see \cite{Nie}), i.e.,
$$
\Tr(\alpha_{ij}\beta_{it})=\left\{
            \begin{array}{ll}
              1,& \hbox{for\ $j=t$,}\\
              0,& \hbox{for\ $j\neq t$.}
            \end{array}, \quad 1\le t,j\le m_i.
          \right.
$$

Consider the  $\mathbb{F}_{q}$-vector space $\bigoplus_{i=1}^n\mathbb{F}_{q^{m_i}}$.
For a vector $\bu=(u_1,\ldots,u_n)$ in $\bigoplus_{i=1}^n\mathbb{F}_{q^{m_i}}$, the Hamming weight of $\bu=(u_1,\ldots,u_n)$  is defined to be the number of nonzero positions, denoted by $\wt(\bu)=\#\{i| 1\le i\le n, u_i\neq 0\in\F_{q^{m_i}}\}$. For two vectors $\bu,\bv\in\bigoplus_{i=1}^n\mathbb{F}_{q^{m_i}}$, define the Hamming distance of $\bu,\bv$ as  $d(\bu,\bv)=\wt(\bu-\bv)$.
An $\mathbb{F}_{q}$-vector space $C\subseteq\bigoplus_{i=1}^n\mathbb{F}_{q^{m_i}}$ is called an $\mathbb{F}_{q}$-linear code of length $n$, dimension $k=\dim_{\F_q} C$.
The minimum distance of the code $C$ is defined as the minimum Hamming distance, i.e.,
\[d=\min \{d(\bu,\bv)|\bu,\bv\in C,\quad \bu\neq\bv\}=\min\{ \wt(\bc)|\bc\neq0\in C\}\]

We define the dual code $C^\bot$ of $C$ as
$$C^\bot=\{\bu=(u_1,\ldots,u_n)\in\bigoplus_{i=1}^n\mathbb{F}_{q^{m_i}}:\sum_{i=1}^n\Tr_i(u_ic_i)=0,\quad\forall \bc=(c_1,\ldots,c_n)\in C\}.$$
We have the following relationship between $C$ and $C^{\perp}$.
\begin{lemma}\label{lem:3.1}
$\dim_{\F_q} C+\dim_{\F_q} C^\bot=\sum_{i=1}^nm_i$.
\end{lemma}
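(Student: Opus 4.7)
The plan is to view the pairing $(\bu,\bv)\mapsto\sum_{i=1}^n\Tr_i(u_iv_i)$ as an $\F_q$-bilinear form $B$ on the ambient space $V=\bigoplus_{i=1}^n\F_{q^{m_i}}$, check that $B$ is non-degenerate, and then invoke the standard identity $\dim_{\F_q}C+\dim_{\F_q}C^\perp=\dim_{\F_q}V=\sum_{i=1}^n m_i$ that holds for any subspace with respect to a non-degenerate $\F_q$-bilinear form on $V$.

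For non-degeneracy I would argue coordinatewise. Suppose $\bu=(u_1,\ldots,u_n)\in V$ satisfies $B(\bu,\bv)=0$ for every $\bv\in V$. Choosing $\bv$ supported only in the $i$-th coordinate reduces this to $\Tr_i(u_iw)=0$ for all $w\in\F_{q^{m_i}}$. Using the dual $\F_q$-basis $\{\beta_{i1},\ldots,\beta_{im_i}\}$ of any chosen $\{\alpha_{i1},\ldots,\alpha_{im_i}\}$ (whose existence is recalled just before the lemma), write $u_i=\sum_j c_{ij}\beta_{ij}$ and test against $w=\alpha_{it}$: this yields $c_{it}=0$ for every $t$, so $u_i=0$. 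Doing this for each $i$ forces $\bu=0$. Equivalently, the primal/dual bases give an $\F_q$-linear isomorphism $V\cong\F_q^N$ with $N=\sum_i m_i$ under which $B$ becomes the standard inner product on $\F_q^N$, so $C^\perp$ corresponds to the usual Euclidean dual of the image of $C$ and the classical dimension formula applies.

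To finish, I would define the $\F_q$-linear map $\phi:V\to\mathrm{Hom}_{\F_q}(C,\F_q)$ by $\phi(\bu)=B(\bu,\cdot)|_C$. Non-degeneracy of $B$ says that every linear functional on $V$ has the form $B(\bu,\cdot)$, and restriction to a subspace is surjective onto its dual, so $\phi$ is surjective; its kernel is exactly $C^\perp$. Rank-nullity then gives $\dim_{\F_q}C^\perp=N-\dim_{\F_q}C$, which is the claimed identity.

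The only conceptual step is non-degeneracy of the trace form, and the dual-basis fact already recorded in the excerpt dispatches it; nothing here is really an obstacle, so this lemma is best presented as a short linear-algebra paragraph rather than a long calculation.
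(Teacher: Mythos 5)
Your proposal is correct and follows essentially the same route as the paper: both use the dual $\F_q$-basis to identify the trace pairing with the standard inner product on $\F_q^{\sum_i m_i}$, so that $C^\perp$ becomes the ordinary Euclidean dual of the coordinate image of $C$. You additionally spell out the non-degeneracy and rank–nullity steps that the paper's very terse proof leaves implicit, which is a welcome completion rather than a different argument.
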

\begin{proof}
Let $\alpha_{i_1},\ldots,\alpha_{im_i}$ be a basis of $\mathbb{F}_{q^{m_i}}/\mathbb{F}_{q}$. Let $\beta_{i_1},\ldots,\beta_{im_i}$ be the dual basis of $\alpha_{i_1},\ldots,\alpha_{im_i}$. For two codewords $\bc\in C, \bu\in C^{\perp}$, write $c_i=\sum_{j=1}^{m_i}c_{ij}\alpha_{ij}$, $u_i=\sum_{t=1}^{m_i}u_{it}\beta_{it}$. Then
$$\Tr(c_iu_i)=\Tr(\sum_{j=1}^{m_i}c_{ij}\alpha_{ij}\sum_{t=1}^{m_i}u_{it}\beta_{it})=
\sum_{j=1}^{m_i}c_{ij}u_{ij}$$

By definition, we have $\sum_{i=1}^n\Tr(c_iu_i)=0$. Thus $\sum_{i=1}^n\Tr(c_iu_i)=\sum_{i=1}^n\sum_{j=1}^{m_i}c_{ij}u_{ij}=0$.
This implies that $\dim_{\F_q} C^\bot=\sum_{i=1}^nm_i-\dim_{\F_q}C$.
\end{proof}

To construct an MOA from codes over mixed alphabets, we need the following definition.
\begin{defn}
Let $C$ be an $\mathbb{F}_{q}$-linear  code with $C\subseteq\bigoplus_{i=1}^n\mathbb{F}_{q^{m_i}}$, we say $G$ is a code matrix if rows of $G$ consist of all codewords of $C$. We say that $H$ is a dual matrix of $C$ if rows of $H$ consist of all codewords of $C^\bot$.
\end{defn}

Now we are going to show that the code matrix $G$ of $C$ is exactly an MOA. We have the following result on G.
\begin{lemma} \label{lem:3.2}
Let C be an $\F_{q}$-linear code of $\bigoplus_{i=1}^n\mathbb{F}_{q^{m_i}}$. Let $G$ be a code matrix of C. If the distance of the code $C^{\perp}$ is $d^\bot\geq k+1$, then for every $A\subseteq[n]$ with $|A|=k$, every vector $\bv_{{A}}$ in $\bigoplus_{i\in A}\mathbb{F}_{q^{m_i}}$ appears exactly $|C|/\prod_{i\in A}q^{m_i}$ times in $G_A$, where $G_A$ is the projection of $G$ at $A$.
\end{lemma}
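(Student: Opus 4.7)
\medskip

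\noindent\textbf{Proof plan for Lemma~\ref{lem:3.2}.}
The plan is to interpret the statement as the mixed-alphabet analogue of the classical fact that a linear code whose dual distance exceeds $k$ forms an orthogonal array of strength $k$. Fix $A\subseteq[n]$ with $|A|=k$ and consider the $\F_q$-linear projection
\[
\pi_A:\; C\;\longrightarrow\;\bigoplus_{i\in A}\F_{q^{m_i}},\qquad \bc\mapsto \bc_A.
\]
Since every codeword of $C$ occurs exactly once as a row of the code matrix $G$, the number of times a given $\bv_A\in\bigoplus_{i\in A}\F_{q^{m_i}}$ appears in $G_A$ equals $|\pi_A^{-1}(\bv_A)|$. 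Because $\pi_A$ is $\F_q$-linear, all nonempty fibres have the same cardinality, namely $|C|/|\pi_A(C)|$. Hence the lemma reduces to showing that $\pi_A$ is surjective, for then $|\pi_A(C)|=\prod_{i\in A}q^{m_i}$ and every fibre has cardinality $|C|/\prod_{i\in A}q^{m_i}$, which is exactly the claim.

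The key step is therefore the surjectivity of $\pi_A$, and I plan to prove it by contradiction. Suppose $\pi_A(C)$ is a proper $\F_q$-subspace of $\bigoplus_{i\in A}\F_{q^{m_i}}$. Then there exists a nonzero $\F_q$-linear functional $\ell:\bigoplus_{i\in A}\F_{q^{m_i}}\to\F_q$ that vanishes on $\pi_A(C)$. The non-degeneracy of the trace form on each $\F_{q^{m_i}}/\F_q$ (which is exactly what the dual-basis identity used in the proof of Lemma~\ref{lem:3.1} encodes) implies that every $\F_q$-linear functional on $\F_{q^{m_i}}$ has the form $x\mapsto \Tr_i(u_i x)$ for a unique $u_i\in\F_{q^{m_i}}$. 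Writing $\ell$ coordinatewise therefore produces $(u_i)_{i\in A}$, not all zero, with
\[
\sum_{i\in A}\Tr_i(u_i c_i)=0\quad\text{for every }\bc\in C.
\]
Extending by $u_i=0$ for $i\notin A$, we obtain $\bu\in C^\perp$ whose support lies in $A$, so $1\le\wt(\bu)\le k$. This contradicts $d^\perp\ge k+1$, so $\pi_A$ must be surjective.

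The only potentially subtle point, and the one I would take most care with, is the identification of $\F_q$-linear functionals on $\F_{q^{m_i}}$ with trace pairings $\Tr_i(u_i\cdot)$; everything else is a short dimension count. Once this identification is in hand, the argument fits into a few lines as sketched above.
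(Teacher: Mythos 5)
Your proof is correct, and it reaches the same crucial contradiction as the paper's proof --- namely that a failure of the uniformity property would produce a nonzero codeword of $C^\perp$ supported inside $A$, hence of weight at most $k$, contradicting $d^\perp\ge k+1$ --- but it packages the argument differently. The paper fixes $\bv_A$ and counts the completions $\bx_{\bar A}$ with $(\bv_A,\bx_{\bar A})\in C$ by writing membership in $C$ as an explicit $\F_q$-linear system whose rows come from expanding a basis $\bh^{(1)},\dots,\bh^{(r)}$ of $C^\perp$ in dual bases; it then shows the coefficient matrix (the $\bar A$-part of that basis, in coordinates) has full row rank $r$, because an $\F_q$-dependency among those rows yields a dual codeword vanishing on $\bar A$. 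This requires the identification $C=(C^\perp)^\perp$, which the paper uses implicitly (it is justified by the dimension formula of Lemma~\ref{lem:3.1} plus the inclusion $C\subseteq (C^\perp)^\perp$, though never stated). You instead count fibres of the projection $\pi_A$ directly --- all nonempty fibres are cosets of $\ker\pi_A$ --- and reduce everything to surjectivity of $\pi_A$, which you establish by the annihilating-functional plus trace-duality argument. Your route avoids both the coordinate expansion and the implicit appeal to the double-dual equality, at the cost of invoking the classification of $\F_q$-linear functionals on $\F_{q^{m_i}}$ as trace pairings $x\mapsto \Tr_i(u_i x)$; that classification is standard (non-degeneracy of the trace form, equivalently a dimension count on the space of such pairings) and is the same ingredient the paper's dual-basis computation encodes, so there is no gap. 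Both proofs are sound; yours is arguably the cleaner of the two.
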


\begin{proof}
For each vector $\bv_{{A}}$ in $\bigoplus_{i\in A}\mathbb{F}_{q^{m_i}}$, to show that the number of each vector $\bv_{{A}}$ appears exactly $|C|/\prod_{i\in A}q^{m_i}$ times , it is equivalent to prove that the set $$T\triangleq\{\bx_{\bar{A}}\in\bigoplus_{i\in \bar{A}}\mathbb{F}_{q^{m_i}}:(\bv_A,\bx_{\bar{A}})\in C\}$$
has size $|C|/\prod_{i\in A}q^{m_i}$.
Let $\{\bh^{(1)},\bh^{(2)},\ldots,\bh^{(r)}\}$ be an $\mathbb{F}_{q}$-linear basis of $C^\bot$ where $r=\dim_{\F_q} C^{\perp}$.
Write $\bv_A=(v_i)_{i\in A},\bx_{\bar{A}}=(x_i)_{i\in\bar{A}},\bh^{(j)}=((h_i^{(j)})_{i\in A},(h_i^{(j)})_{i\in\bar{A}})$. Then

\begin{eqnarray}
T&=&\{\bx_{\bar{A}}\in\bigoplus_{i\in\bar{A}}\mathbb{F}_{q^{m_i}}:\sum_{i\in A}Tr_i(v_ih_i^{(j)})+\sum_{i\in\bar{A}}Tr_i(x_ih_i^{(j)})=0,j=1,2,\cdots,r\}\\
&=&\{\bx_{\bar{A}}\in\bigoplus_{i\in\bar{A}}\mathbb{F}_{q^{m_i}}:\sum_{i\in\bar{A}}Tr_i(x_ih_i^{(j)})=-\sum_{i\in A}Tr_i(v_ih_i^{(j)}),j=1,2,\cdots,r\}.
\end{eqnarray}
Write $x_i=\sum_{l=1}^{m_i}x_{il}\alpha_{il},h_i^{(j)}=\sum_{l=1}^{m_i}h_{il}^{(j)}\beta_{il}$ where $x_{il}$ and $h_{il}^{(j)}$ are elements in $\F_q$ for $1\le l\le m_i$. Then
$$
|T|=
\begin{vmatrix}
\begin{Bmatrix}
(x_{i1},\ldots,x_{im_i})_{(i\in\bar{A})}\in\mathbb{F}_q^{m_i}:
\begin{pmatrix}
h_{i1}^{(1)}&\ldots&h_{im_i}^{(1)}\\
\vdots&\ddots&\vdots\\
h_{i1}^{(r)}&\ldots&h_{im_i}^{(r)}
\end{pmatrix}
_{i\in \overline{A}}
\begin{pmatrix}
x_{i1}\\
\vdots\\
x_{im_i}
\end{pmatrix}
_{i\in\bar{A}}
=
\begin{pmatrix}
-\sum_{i\in A}Tr_i(v_ih_i^{(1)})\\
\vdots\\
-\sum_{i\in A}Tr_i(v_ih_i^{(r)})
\end{pmatrix}
\end{Bmatrix}
\end{vmatrix}
$$
\end{proof}
where the matrix
\[
\begin{pmatrix}
h_{i1}^{(1)}&\ldots&h_{im_i}^{(1)}\\
\vdots&\ddots&\vdots\\
h_{i1}^{(r)}&\ldots&h_{im_i}^{(r)}
\end{pmatrix}_{i\in \overline{A}}\]
represents \[
\begin{pmatrix}
h_{i_11}^{(1)}&\ldots&h_{i_1m_{i_1}}^{(1)}&\ldots& h_{i_{n-k}1}^{(1)}&\ldots&h_{i_{n-k}m_{i_{n-k}}}^{(1)}\\
\vdots&\ddots&\vdots & \ddots &\vdots&\ddots&\vdots\\
h_{i_11}^{(r)}&\ldots&h_{i_1m_{i_1}}^{(r)}&\ldots& h_{i_{n-k}1}^{(1)}&\ldots&h_{i_{n-k}m_{i_{n-k}}}^{(1)}
\end{pmatrix}\]
for $\overline{A}=\{i_1,\cdots,i_{n-k}\}$.
Now we are going to show that the above matrix has rank $r$.
 Suppose the matrix has rank smaller than $r$, then there exists $\lambda_j$ such that $\sum_{j=1}^r \lambda_j(h_i^{(j)})_{i\in \bar{A}}=\bo$ where $\lambda_j$ are not all zeros for $j=1, \cdots, r$. Recall that $\{\bh^{(1)},\bh^{(2)},\ldots,\bh^{(r)}\}$ is an $\mathbb{F}_{q}$-linear basis of $C^\bot$. Then $\sum_{j=1}^r \lambda_j\bh^{(j)}$ is a codeword of $C^\bot$. Write  $\sum_{j=1}^r \lambda_j\bh^{(j)}=(\sum_{j=1}^r \lambda_j(h_i^{(j)})_{i\in A},\sum_{j=1}^r\lambda_j (h_i^{(j)})_{i\in \bar{A}})$.
As $d^\bot\geq k+1$, by deleting the corresponding positions indexed by $A$ on $\sum_{j=1}^r \lambda_j\bh^{(j)}$, the remaining vector $(\sum_{j=1}^r\lambda_j (h_i^{(j)})_{i\in \bar{A}})$ could not be a zero vector since $|A|=k$. This gives a contradiction. Thus the result that the matrix  has rank $r$ follows.
 This implies the number of solutions is $q^{\sum_{i\in\bar{A}}m_i-r}=q^{\sum\limits_{i=1}^n m_i-\sum_{i\in A}m_i-r}=\frac{|C|}{q^{\sum_{i\in A} m_i}}$ i.e., $|T|=\frac{|C|}{\prod_{i\in A}q^{m_i}}$.

One can easily check that the code matrix $G$ actually leads to an irredundant MOA.
\begin{lemma} \label{lem:3.3}
Let C be an $\mathbb{F}_q$-linear code of $\bigoplus_{i=1}^n\mathbb{F}_{q^{m_i}}$. Let G be a code matrix of C. If $d\geq k+1$, then for any $A\subseteq[n]$ with $|A|=k$, every two rows of the submatrix $G_{\cA}$ are district.
\end{lemma}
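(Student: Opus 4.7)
The plan is to argue by contraposition on the codewords indexing the rows. Since the rows of $G$ are exactly the codewords of $C$, two rows of $G_{\cA}$ coincide precisely when there are codewords $\bc,\bc'\in C$ with $\bc\neq\bc'$ but $\bc_{\cA}=\bc'_{\cA}$. Linearity of $C$ then forces $\bc-\bc'\in C$ to be a nonzero codeword supported entirely on $A$.

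Next, I would bound the weight of this difference: since $\supp(\bc-\bc')\subseteq A$ and $|A|=k$, we have $\wt(\bc-\bc')\le k$. But by hypothesis the minimum distance of $C$ satisfies $d\ge k+1$, and every nonzero codeword has weight at least $d$, which is a contradiction. Hence no such pair exists, and distinct rows of $G$ yield distinct rows of $G_{\cA}$.

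There is really no obstacle here; the argument is the standard punctured-code argument adapted to the mixed-alphabet setting. The only thing worth spelling out carefully for the reader is that the notion of Hamming weight used (number of coordinates $i$ with $u_i\neq 0\in \F_{q^{m_i}}$) is compatible with this support argument, so that a codeword vanishing on $\cA$ indeed has weight at most $|A|=k$ regardless of the individual field sizes $q^{m_i}$. With that remark the proof fits in a couple of lines.
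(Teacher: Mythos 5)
Your proof is correct and is essentially the paper's argument: the paper notes directly that two distinct codewords differ in at least $d\ge k+1$ positions, so deleting the $k$ positions indexed by $A$ leaves them distinct, while you phrase the same fact through the nonzero difference codeword $\bc-\bc'$ having weight at most $|A|=k<d$. The two formulations are interchangeable given the paper's definition of minimum distance as both the minimum pairwise distance and the minimum nonzero weight.
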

\begin{proof}
By assumption, the distance $d$ of $C$ is $\ge k+1$, this means any two codewords are different in more than $k+1$ positions. Thus by deleting $k$ positions, the remaining vector of length $n-k$ are different.
\end{proof}

By Lemmas \ref{lem:3.2} and \ref{lem:3.3}, we can easily see that the code matrix $G$ is exactly an IrMOA $(r,(q^{m_1},\ldots,q^{m_n}),k)$, where $r=q^{\dim_{\F_q}(C)}$ which is the number of rows of $G$.
Thus we can construct a $k$-uniform state from our codes in $\bigotimes_{i=1}^n\CC^{d_i}$ where $d_i=q^{m_i}$.
\begin{thm} \label{thm:3.1}
Let C be an $\mathbb{F}_{q}$-linear code of $\bigoplus_{i=1}^n\mathbb{F}_{q^{m_i}}$. If $\min\{d,d^\bot\}\geq k+1$, then there exists a k-uniform state in $\bigotimes_{i=1}^n\CC^{d_i}$ where $d_i=q^{m_i}$.
\end{thm}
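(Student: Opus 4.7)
The plan is to assemble Theorem \ref{thm:3.1} directly from the two lemmas that immediately precede it, together with the IrMOA-to-quantum-state bridge of Theorem \ref{thm:2.1}. There is no new analytic work needed; the proof is a combinatorial composition. Concretely, I would let $G$ denote the code matrix of $C$, which is a $|C|\times n$ array whose $i$-th column has entries in the alphabet $\F_{q^{m_i}}$ of size $d_i=q^{m_i}$. The size of $G$ is $r=|C|=q^{\dim_{\F_q}C}$ rows.

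First, I would invoke Lemma \ref{lem:3.2}: since $d^\bot\ge k+1$, for every $A\subseteq[n]$ with $|A|=k$, each vector in $\bigoplus_{i\in A}\F_{q^{m_i}}$ occurs exactly $|C|/\prod_{i\in A}q^{m_i}=r/\prod_{i\in A}d_i$ times as a row of $G_A$. This is precisely the defining balance condition of a mixed orthogonal array $MOA(r,(d_1,\dots,d_n),k)$. Next, I would invoke Lemma \ref{lem:3.3}: since $d\ge k+1$, for every $A\subseteq[n]$ with $|A|=k$, the rows of the complementary submatrix $G_{\bar A}$ are pairwise distinct. This is exactly the ``irredundancy'' condition in the definition of an IrMOA. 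Putting the two together, $G$ is an $\mathsf{IrMOA}(r,(d_1,\ldots,d_n),k)$ with $d_i=q^{m_i}$.

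Finally, I would apply Theorem \ref{thm:2.1}: from any $\mathsf{IrMOA}(r,(d_1,\ldots,d_n),k)$ with row entries $(m_{i1},\ldots,m_{in})$ one obtains the $k$-uniform state
\[
|\Phi\rangle \;=\; \frac{1}{\sqrt{r}}\sum_{i=1}^{r}|m_{i1}\cdots m_{in}\rangle \quad\in\quad \bigotimes_{i=1}^{n}\CC^{d_i},
\]
which is the desired $k$-uniform state in the heterogeneous system $\bigotimes_{i=1}^{n}\CC^{d_i}$.

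There is no genuine obstacle in this theorem itself; all the work was absorbed into Lemmas \ref{lem:3.2} and \ref{lem:3.3}. The only minor point worth being careful about is the indexing convention: Lemma \ref{lem:3.2} is phrased in terms of the $k$-subset $A$ controlling the balanced columns, while Lemma \ref{lem:3.3} is phrased in terms of the $(n-k)$-subset $\bar A$ controlling the distinct-row condition. I would just make sure to note that the two lemmas are being applied to the same $A$, so that the MOA and irredundancy conditions refer to a matched pair $(A,\bar A)$ as in the definition of IrMOA. With that observation, the proof is a one-line invocation of the three preceding results.
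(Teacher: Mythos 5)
Your proposal is correct and matches the paper's own argument exactly: the paper also observes that Lemmas \ref{lem:3.2} and \ref{lem:3.3} together show the code matrix $G$ is an $IrMOA(r,(q^{m_1},\ldots,q^{m_n}),k)$ with $r=q^{\dim_{\F_q}(C)}$, and then cites Theorem \ref{thm:2.1} to conclude. Your version is simply a more explicit write-up of the same one-line composition.
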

\begin{proof}
The desired result follows from Theorem \ref{thm:2.1}.
\end{proof}

As mentioned, the construction of k-uniform states with high uniformity in heterogeneous systems is an interesting but challenging problem. By Theorem 7, to construct $k$-uniform states with $k$ as large as possible ($k\le\lfloor n/2\rfloor$), we wish to construct codes with distance $d$ and the dual distance $d^{\perp}$ as large as possible.

\subsection{$k$-uniform states from generalized algebraic geometry codes}
In this section, we show that as long as there is a code $C$ over mixed alphabets with distance $d$ and dual distance $d^{\perp}$, we can construct a $k$-uniform state with $k=min{d,d^{\perp}}-1$.
Here we consider a special class of codes called algebraic geometry codes. Firstly we generalize the definition of algebraic geometry codes to the case where coordinates of the codewords may belong to different finite fields. Then by using such generalized algebraic geometry codes, we are able to construct some $k$-uniform states in heterogeneous systems.

Let  $P_1,\cdots,P_n$ be places of degree $m_i$ for $i=1,2,\cdots,n$.
Define the generalized functional algebraic geometry code as follows
$$
C_\cL(\PP,G)=\{(f(P_1),\ldots,f(P_n)):f\in \cL(G)\}, where\;\PP=\{P_1,\ldots,P_n\}.
$$
Define another generalized algebraic geometry code
$$
C_\Omega(\PP,G)=\left\{(\res_{P_1}(\eta),\ldots,\res_{P_n}(\eta)):\eta\in \Omega\left(G-\sum_{i=1}^nP_i\right)\right\}
$$
where $\res_{P_i}(\eta)$ stands for the residue of $\eta$ at $P_i$. We can see that the codes $C_\cL(\PP,G)$ and $C_\Omega(\PP,G)$ are $\F_q$-vector space in $\bigoplus_{i=1}^n\mathbb{F}_{q^{m_i}}$ where $m_i=\deg P_i$. So $C_\cL(\PP,G)$ and $C_\Omega(\PP,G)$ are $\F_q$-linear codes of length $n$ in $\bigoplus_{i=1}^n\mathbb{F}_{q^{m_i}}$.
We have the following lemma on $C_\cL(\PP,G)$ and $C_\Omega(\PP,G)$.
%The algebraic geometry code $C(G,\cP)$ is defined as the image of $\cL(G)$ in $\F_q^n$ under the following evaluation map
%$$C:\mathcal{L}(G)\longrightarrow \F_q^n,\quad f\mapsto (f(P_1),...,f(P_n)).$$ If $n>\deg (G)$, then $C(G,\cP)$ is an $[n,\ge\deg (G)-g+1,\ge n-\deg (G)]_q$-AG code. Throughout this section, we always assume that $n$ is bigger than $\deg(G)$.

\begin{lemma}\label{lem:4.1}
The dual code of $C_\cL(\PP,G)$ is $C_\Omega(\PP,G)$.
\end{lemma}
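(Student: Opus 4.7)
The plan is to prove the equality in two standard steps: first verify the inclusion $C_\Omega(\PP,G)\subseteq C_\cL(\PP,G)^\perp$ via the residue theorem, and then upgrade it to an equality through a dimension count using Riemann--Roch together with Lemma \ref{lem:3.1}.

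For the inclusion, I would fix $f\in\cL(G)$ and $\eta\in\Omega(G-D)$ with $D=\sum_{i=1}^{n}P_i$, and analyze the differential $\omega:=f\eta$. Since $\supp(G)\cap\{P_1,\ldots,P_n\}=\emptyset$, the function $f$ is regular at every $P_i$, so $\nu_{P_i}(f\eta)\ge\nu_{P_i}(\eta)\ge -1$; at any other place $P$, $\nu_P(f)\ge -\nu_P(G)$ and $\nu_P(\eta)\ge \nu_P(G)$ imply $\nu_P(f\eta)\ge 0$. Hence $\omega$ has at most simple poles, and only at the $P_i$. Expanding locally at $P_i$ with a uniformizer $t_i$ and using $f\equiv f(P_i)\pmod{t_i}$, the coefficient of $t_i^{-1}$ in $f\eta$ equals $f(P_i)\cdot\res_{P_i}(\eta)\in\F_{q^{m_i}}$, i.e.
\[
\res_{P_i}(f\eta)=f(P_i)\cdot\res_{P_i}(\eta).
\]
The residue theorem (in the form that the sum over all places of the $\F_q$-traces of the local residues of any differential vanishes), applied to $\omega$, therefore collapses to
\[
\sum_{i=1}^{n}\Tr_i\bigl(f(P_i)\cdot\res_{P_i}(\eta)\bigr)=0,
\]
which is precisely the orthogonality condition defining $C_\cL(\PP,G)^\perp$.

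For equality, I would match $\F_q$-dimensions. The evaluation map $\Psi\colon\cL(G)\to\bigoplus_{i=1}^{n}\F_{q^{m_i}}$, $f\mapsto(f(P_1),\ldots,f(P_n))$, is $\F_q$-linear with kernel $\cL(G-D)$, yielding $\dim_{\F_q}C_\cL(\PP,G)=\ell(G)-\ell(G-D)$. Similarly, the residue map $\Omega(G-D)\to\bigoplus_{i=1}^{n}\F_{q^{m_i}}$ has kernel $\Omega(G)$, giving $\dim_{\F_q}C_\Omega(\PP,G)=i(G-D)-i(G)$. Summing these and applying Riemann--Roch $\ell(H)-i(H)=\deg(H)+1-g$ at $H=G$ and $H=G-D$, the genus and cross-terms cancel and one obtains
\[
\dim_{\F_q}C_\cL(\PP,G)+\dim_{\F_q}C_\Omega(\PP,G)=\deg(D)=\sum_{i=1}^{n}m_i.
\]
By Lemma \ref{lem:3.1}, the right-hand side equals $\dim_{\F_q}C_\cL(\PP,G)+\dim_{\F_q}C_\cL(\PP,G)^\perp$, so the containment from the first step must be an equality.

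The main obstacle is the residue formalism at non-rational places, since the preliminaries only introduce $\res_P$ for rational $P$. One has to work with the local residue of a differential at $P_i$ as an element of the residue field $O_{P_i}/P_i\cong\F_{q^{m_i}}$, and invoke the generalized residue theorem asserting that the sum of the $\F_q$-traces of the local residues of any differential, taken over all places, is zero. Granting these standard facts from the theory of differentials on function fields, the product rule $\res_{P_i}(f\eta)=f(P_i)\res_{P_i}(\eta)$ reduces to a one-line Laurent expansion, and both the inclusion and the dimension count proceed without complication.
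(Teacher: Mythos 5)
Your proposal follows essentially the same route as the paper: the residue theorem (with $\res_{P_i}(f\eta)=f(P_i)\res_{P_i}(\eta)$) gives the containment $C_\Omega(\PP,G)\subseteq C_\cL(\PP,G)^{\perp}$, and a dimension count combined with Lemma \ref{lem:3.1} upgrades it to equality. Your write-up is in fact a more careful version of the paper's argument, since you correctly account for the kernels $\cL(G-D)$ and $\Omega(G)$ in the dimension count rather than tacitly assuming the evaluation and residue maps are injective.
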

\begin{proof}
By the residue Theorem, we have
$$
\sum_{i=1}^nTr_i(f(P_i)\res_{P_i}(\eta))=\sum_{i=1}^nTr_i(\res_{P_i}(f\eta))=0
$$
Furthermore,
$$
\dim_{\mathbb{F}_q}\cL(G)+\dim_{\mathbb{F}_q}\Omega(G-\sum_{i=1}^np_i)=\deg(\sum_{i=1}^nP_i)=\sum_{i=1}^nm_i
$$
The desired result follows.\end{proof}

Below we give a lower bound on the minimum distance of $C_\cL(\PP,G)$ and $C_\Omega(\PP,G)$ respectively. We give two notations first. For the set $\{m_1,m_2,\dots,m_n\}$ and an integer $r\geq1$, denote by
\[t_r(m_1,m_2,\dots,m_n)=\max\{|A|: A\subseteq [n], \sum_{i\in A}m_i\leq r\}.\]
Denote by
\[s_r(m_1,m_2,\dots,m_n)=\min\{|A|: A\subseteq [n], \sum_{i\in A}m_i\ge r\}.\]

%denote by $t_r(m_1,m_2,\dots,m_n)$ the largest $t$ satisfying $\sum_{i\in A}m_i\leq r$ for every $A\subseteq[n]$ with $|A|=t$. Denote by $S_r(m_1,m_2,\dots,m_n)$ the smallest $s$ satisfying $\sum_{i\in A}m_i\geq r$ for every $A\subseteq[n]$ with $|A|=s$. We have the following two Lemmas.

\begin{lemma}\label{lem:4.2}
The minimum distance of $C_\cL(\PP,G)$ is at least $n-t_m(m_1,m_2,\dots,m_n)$.
\end{lemma}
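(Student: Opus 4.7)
The plan is to mimic the standard distance-bound argument for ordinary AG codes, but to keep careful track of the fact that each zero of $f$ at a place $P_i$ costs $m_i$ units of degree rather than a single unit.

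First I would fix a nonzero codeword and let $f\in\cL(G)\setminus\{0\}$ be a function whose image is that codeword. Let $A=\{i\in[n]: f(P_i)=0\}$ be the set of zero positions. The Hamming weight of the codeword is $n-|A|$, so the goal reduces to showing $|A|\le t_m(m_1,\dots,m_n)$, where $m=\deg(G)$.

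Next I would translate the vanishing condition into a divisor inequality. For each $i\in A$ we have $v_{P_i}(f)\ge 1$, hence
\[
(f)+G-\sum_{i\in A}P_i \;\ge\; 0,
\]
which means $f\in \cL\bigl(G-\sum_{i\in A}P_i\bigr)$. Since $f\ne 0$, this Riemann--Roch space is nonzero, so its divisor must have nonnegative degree. Using $\deg(P_i)=m_i$ gives
\[
0\;\le\; \deg\Bigl(G-\sum_{i\in A}P_i\Bigr) \;=\; m-\sum_{i\in A}m_i,
\]
so $\sum_{i\in A}m_i\le m$. By the definition of $t_m$, this immediately yields $|A|\le t_m(m_1,\dots,m_n)$, which completes the argument.

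There is no real obstacle here: once one notices that a zero at a place of degree $m_i$ eats $m_i$ from the degree budget supplied by $G$, the bound follows from the nonnegativity of $\deg$ on nonempty Riemann--Roch spaces. The only mild subtlety is making sure that $f(P_i)=0$ genuinely means $v_{P_i}(f)\ge 1$, which is valid because $\mathrm{Supp}(G)\cap\{P_1,\dots,P_n\}=\emptyset$ (assumed in the construction), so $v_{P_i}(f)\ge 0$ holds for every $f\in\cL(G)$ and every $i$, and the residue class $f(P_i)\in O_{P_i}/P_i\cong \F_{q^{m_i}}$ vanishes exactly when $v_{P_i}(f)\ge 1$.
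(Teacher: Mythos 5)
Your proof is correct and follows essentially the same route as the paper: identify the zero set of the codeword, observe that $f$ then lies in $\cL\bigl(G-\sum_{i\in A}P_i\bigr)$, and use nonnegativity of the degree together with the definition of $t_m$ (your extra remark justifying that $f(P_i)=0$ is equivalent to $v_{P_i}(f)\ge 1$ is a welcome clarification the paper leaves implicit). No gaps.
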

\begin{proof}
Let $\bc=(f(p_1),\ldots,f(p_n))$ be a nonzero codeword of $C_\cL(\PP,G)$ with $f\in \cL(G)$. Then $f\neq0$.
Let $A=\supp(\bc)$. Then for $i\in \bar{A}$, $f(P_i)=0$ which means that
$f\in \cL(G-\sum_{i\in\bar{A}}P_i)$. Thus $\deg(G-\sum_{i\in\bar{A}}P_i)\geq0$, i.e.,
$m\geq\sum_{i\in\bar{A}}m_i$. Then we have
$|\bar{A}|\leq t_m(m_1,m_2,\dots,m_n)$ and $|{A}|\geq n-t_m(m_1,\dots,m_n)$.
Therefore, the minimum distance $d(C_\cL(\PP,G))$ of $C_\cL(\PP,G)$ is at least $n-t_m(m_1,m_2,\dots,m_n)$.
\end{proof}

\begin{lemma}\label{lem:4.3}
The minimum distance of $C_\Omega(\PP,G)$ is at least $S_{m-2\g+2}(m_1,\dots,m_n)$.
\end{lemma}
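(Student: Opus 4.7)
The plan is to mimic the classical argument for the dual AG code bound, the only wrinkle being that the places $P_i$ are no longer assumed rational. Start with a nonzero codeword
\[\bc=(\res_{P_1}(\eta),\dots,\res_{P_n}(\eta)),\qquad \eta\in \Omega\Bigl(G-\sum_{i=1}^n P_i\Bigr)\setminus\{0\},\]
and set $A=\supp(\bc)$. The goal is to upgrade the condition $\eta\in \Omega(G-\sum_{i=1}^nP_i)$ to the stronger $\eta\in \Omega(G-\sum_{i\in A}P_i)$, and then to convert this into a lower bound on $\sum_{i\in A}m_i$ via $\deg\div(\eta)=2\g-2$.

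First I would verify the upgrade step. By hypothesis $\nu_{P_i}(\eta)\ge -1$ for every $i$. If $i\in\bar A$, then $\res_{P_i}(\eta)=0$. Expand $\eta$ locally in the completion $\hat F_{P_i}\cong \F_{q^{m_i}}((t_i))$ as $\eta=(a_{-1}t_i^{-1}+a_0+a_1t_i+\cdots)\,dt_i$ with $a_j\in\F_{q^{m_i}}$; then $\res_{P_i}(\eta)=a_{-1}$, so the vanishing of the residue forces $a_{-1}=0$, i.e.\ $\nu_{P_i}(\eta)\ge 0$. Hence $\div(\eta)\ge G-\sum_{i\in A}P_i$, which is exactly $\eta\in \Omega(G-\sum_{i\in A}P_i)$.

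Next I would exploit that $\eta$ is nonzero: every nonzero differential has $\deg\div(\eta)=2\g-2$, so
\[2\g-2=\deg\div(\eta)\ge \deg\Bigl(G-\sum_{i\in A}P_i\Bigr)=m-\sum_{i\in A}m_i,\]
which rearranges to $\sum_{i\in A}m_i\ge m-2\g+2$. By the definition of $s_r$ this forces $|A|\ge s_{m-2\g+2}(m_1,\dots,m_n)$, giving the claimed lower bound on the weight of $\bc$ and hence on the minimum distance of $C_\Omega(\PP,G)$.

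The only step that is not essentially formal is the local-expansion argument, which is the reason the bound goes through unchanged from the rational-place setting: one must know that for a differential with a pole of order at most one at a place $P$ of arbitrary degree, the residue (now an element of $\F_{q^{\deg P}}$) vanishes precisely when the differential is regular at $P$. Once that is in hand, Riemann–Roch is not even needed; the pure degree count $\deg\div(\eta)=2\g-2$ does all the work.
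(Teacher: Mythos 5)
Your proof is correct and follows essentially the same route as the paper's: identify $A=\supp(\bc)$, upgrade $\eta$ to lie in $\Omega(G-\sum_{i\in A}P_i)$, and compare degrees against $\deg\operatorname{div}(\eta)=2\g-2$. The only difference is that you spell out the local-expansion justification for why a vanishing residue at a (possibly non-rational) place $P_i$ with $\nu_{P_i}(\eta)\ge -1$ forces regularity there, a step the paper states without comment.
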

\begin{proof}
Let $\bc=(res_{P_1}(\eta),\ldots,res_{P_n}(\eta))\in C_\Omega(\PP,G)$ with a nonzero $\eta\in\Omega(G-\sum_{i=1}^np_i)$. Let $A=\supp(\bc)$. Then $\eta\in\Omega(G-\sum_{i\in A}P_i)$. So we have $2\g-2\geq deg(G-\sum_{i\in A}P_i)=m-\sum_{i\in A}m_i$. This means
$\sum_{i\in A}m_i\geq m-2\g+2$. Then by definition, $|A|\geq S_{m-2\g+2}(m_1,\dots,m_n)$. The desired result follows.
\end{proof}

Now we are going to present some explicit examples. By choosing suitable curves and points, we can compute the value of $t_r(m_1,\dots,m_n)$ and $s_r(m_1,\dots,m_n)$ explicitly. Combining with Theorem \ref{thm:3.1}, we can show that we are able to produce some $k$-uniform states from generalized algebraic geometry codes. To be simple, we consider the rational function field first, i.e., $\g=0$.
\begin{exm}
Choose $\PP=\{P_1,P_2,\cdots,P_n\}$ where $P_i$ are points of degree one. Then  we know that
 both of $C_\cL(\PP,G)$ and $C_\Omega(\PP,G)$ are Reed-Solomon codes of length $n$ over $\F_q$.
 Suppose $C_\cL(\PP,G)$ is a  Reed-Solomon code of dimension $k$ where $k\leq\frac{n}{2}$. Then the distance $d=n-k+1\geq k+1$ and the distance of the dual code $C_\Omega(\PP,G)$ is $k+1$. Therefore, by Theorem \ref{thm:3.1} we get a $k$-uniform state in the homogeneous system $\bigotimes_{i=1}^n\mathbb{C}^q$ for any $k\leq\frac{n}{2}$.
 %In particular, if $k=$
 %This is a $k$-uniform state in  a homogeneous system. Thus we can obtain AME state for any $n\le q+1$ in $\bigotimes_{i=1}^n\mathbb{C}^q$.
\end{exm}

\begin{Remark}
Example 1 is reduced to the homogenous case since we only use points of degree one, which is exactly the Reed-Solomon code. This result has been obtained in \cite[Theorem 7]{Feng}. One can also obtain this result from  \cite[Cosntruction 1]{Pang1}.
\end{Remark}

\begin{exm}
Let $n=q+1+\frac{q^2-q}{2}$.
Let $P_1,\ldots,P_{q+1}$ be places of degree one in $\mathbb{F}_q(x)$ and $P_{q+2},\ldots,P_n$ be places of degree two in $\mathbb{F}_q(x)$. Therefore, $\{m_1,\ldots,m_n\}=\{\underbrace{1,\ldots,1}_{q+1},\underbrace{2,\ldots,2}_{{(q^2-q)}/{2}}\}$.
For $q+1\leq r\leq n$, we have $t_r(m_1,\ldots,m_n)=q+1+\lfloor {(r-q-1)}/{2} \rfloor$, $S_r(m_1,\ldots,m_n)=\lceil \frac{r}{2} \rceil$.
Consider the code $C_\cL(\PP,G)$ with $\deg(G)=m$. Then $d\geq n-t_m(m_1,\ldots,m_n)$, $d^\bot\geq S_{m+2}(m_1,\ldots,m_n)$.
By Theorem \ref{thm:3.1}, we want $\min\{d,d^{\perp}\}=\min\{n-t_m(m_1,\ldots,m_n),S_{m+2}(m_1,\ldots,m_n)\}\ge k+1$ for some $k$, namely
$$
 i.e., \left\{
           \begin{array}{ll}
             n-(q+1+\lfloor \frac{m-(q+1)}{2} \rfloor)\geq k+1\\
             \lceil \frac{m+2}{2} \rceil\geq k+1.
            \end{array}.
          \right.
$$

If q is odd, we can take $m=n-\frac{q+1}{2}-1$ and $k=\lceil \frac{n-\frac{q+1}{2}-1+2}{2} \rceil-1=\lceil \frac{n-\frac{q+1}{2}-1}{2} \rceil=\lceil \frac{2n-q-3}{4} \rceil$.
If q is even, we can take $m=n-\frac{q}{2}-2$ and $k=\lceil \frac{n-\frac{q}{2}}{2} \rceil-1=\lceil \frac{2n-q-4}{4} \rceil$.

In conclusion, we have a k-uniform state in $\otimes_{i=1}^{q+1}\mathbb{C}^q\otimes_{i=1}^{\frac{q^2-q}{2}}\mathbb{C}^{q^2}$.where\\
$$
k=
\begin{cases}
\lceil \frac{2n-q-3}{4} \rceil,\quad \text if\; q \;is\; odd\\
\lceil \frac{2n-q-4}{4} \rceil, \quad \text if \;q \;is \;even\\
\end{cases}
$$

Below we give some numerical examples.
\begin{enumerate}
  %\item [(1)] Take $q=3$, then $n=7$, we can obtain $1$-uniform state in $(\CC^3)^{\otimes 4}\otimes(\CC^{9})^{\otimes 3}$.
  \item [(1)] Take $q=4$, then $n=11$, we can obtain a $4$-uniform state in $(\CC^4)^{\otimes 5}\otimes(\CC^{16})^{\otimes 6}$.

  \item [(2)] Take $q=5$, then $n=16$, we can obtain a $6$-uniform state in $(\CC^5)^{\otimes 6}\otimes(\CC^{25})^{\otimes 10}$.
\end{enumerate}
\end{exm}

\begin{Remark}
The $k$-uniform state in Example 2 can be produced from \cite[Theorem 3.7]{Pang}.
\end{Remark}

It is interesting to produce $k$-uniform states in $n$-partite quantum states with $k$ as large as possible for given $n$, i.e., where $k=\lfloor\frac{n}{2}\rfloor$ is the optimal case which is called an AME state. We can see that our construction is a powerful method for producing $k$-uniform states since
 the multipartitle quantum states constructed are highly entangled, namely $k$ is quite close to $\frac{n}{2}$.
%In Example 2, we can see that we are able to obtain families of new $k$-uniform states in heterogeneous systems, especially for $k\ge 4$.
In fact, we can also produce some AME states from our construction. Note that in literature, there are only few results on the existence of AME states.
The following example shows that there exist a lot of new AME states in heterogeneous systems for odd integer $n$.
\begin{exm}
Let $n$ be an odd integer. Let $P_1,\ldots,P_{n-1}$ be places of degree one in $\mathbb{F}_q(x)$. Let $P_{n}$ be a place of degree two in $\mathbb{F}_q(x)$. Then set $\{m_1,\ldots,m_{n-1},m_n\}=\{1,\ldots,1,2\}$.
%$\{m_1,\ldots,m_{n-1},m_n\}=\{\underbrace{1,\ldots,1}_{n-1},2\}$.
For $1\le r\le n-1$, we have $t_r(m_1,\ldots,m_n)=r$, $S_r(m_1,\ldots,m_n)=r-1$.
We want $n-t_m(m_1,\ldots,m_n)\geq k+1$ and $S_{m+2
}(m_1,\ldots,m_n)\geq k+1$.
%  for $k=\lfloor\frac{n}{2}\rfloor$.
Take $m=\frac{n-1}{2}$, we can get an $(r=q^\frac{n+1}{2},(q,q,\cdots,q,q^2),\frac{n-1}{2})$ IrMOA, then we  obtain  $k$-uniform state with $k=\lfloor\frac{n}{2}\rfloor=\frac{n-1}{2}$ in $\otimes_{i=1}^{n-1}\mathbb{C}^q\otimes\mathbb{C}^{q^2}$ which is an AME state.

Here we give some numerical examples.
\begin{enumerate}
 % \item [(1)] Take $q=2,n=3$, we get a $(4,(2,2,4),1)$ IrMOA, which can produce a $1$-AME state in $\mathbb{C}^2\otimes\mathbb{C}^2\otimes\mathbb{C}^{4}$.
  \item [(1)] Take $q=3,n=5$, we get a $(27,(3,3,3,3,9),2)$ IrMOA, which can produce a $2$-AME state in $(\mathbb{C}^3)^{\otimes 4}\otimes\mathbb{C}^{9}$.
  \item [(2)] Take $q=5,n=7$, we can produce a $3$-AME state in $(\mathbb{C}^5)^{\otimes 6}\otimes\mathbb{C}^{25}$.
  \item [(3)] Take $q=7,n=9$, we can produce a $4$-AME state in $(\mathbb{C}^7)^{\otimes 8}\otimes\mathbb{C}^{49}$.
\end{enumerate}
\end{exm}

The above examples are obtained from  codes over rational function field, i.e., $\g=0$. We can also consider more general function fields. Below we give an example which is obtained via elliptic curve, i.e., $\g=1$. Of course, we can give many examples from  curves with high genus.

\begin{exm}
Let $q\ge 4$ be an even power of a prime. Choose an elliptic curve over $\F_q$ with $q-2\sqrt{q}+1$ rational places of degree one. Then this curve has $\frac{q^2+2q+1-(q-2\sqrt{q}+1)}{2}=\frac{q^2+q+2\sqrt{q}}{2}$ places of degree 2. Let $n=q-2\sqrt{q}+1+\frac{q^2+q+2\sqrt{q}}{2}$. Let $P_1,\ldots,P_{q-2\sqrt{q}+1}$ be places of degree one in $\mathbb{F}_q(x)$. Let $P_{q-2\sqrt{q}+2},\ldots,P_{n}$ be places of degree two. Then for all $r$ with $q-2\sqrt{q}+1<r<n$. We have
 $t_r(m_1,\ldots,m_n)=q-2\sqrt{q}+1+\lfloor\frac{r-q+2\sqrt{q}-1}{2}\rfloor$, $S_r(m_1,\ldots,m_n)=\lceil\frac{r}{2}\rceil$.
Consider the code $C_\cL(\PP,G)$ with $\deg(G)=m$. To find the largest positive integer $k$ satisfying
$$
 i.e., \left\{
           \begin{array}{ll}
             n-(q-2\sqrt{q}+1+\lfloor\frac{r-q+2\sqrt{q}-1}{2}\rfloor)\geq k+1\\
             \lceil \frac{m}{2} \rceil\geq k+1
            \end{array},
          \right.
$$
we can take $m=n-\lceil\frac{q-2\sqrt{q}+1}{2}\rceil$, then $k=\lceil\frac{n}{2}-\lceil\frac{q-2\sqrt{q}+1}{4}\rceil\rceil-1$.

We also give some numerical examples.
\begin{enumerate}
  \item [(1)] Take $q=4$ then $n=13$, we can produce a $5$-uniform state in $\mathbb{C}^4\otimes(\mathbb{C}^{16})^{\otimes 12}$.
  \item [(2)] Take $q=9$ then $n=52$, we can produce a $24$-uniform state in $(\mathbb{C}^9)^{\otimes 4}\otimes(\mathbb{C}^{81})^{\otimes 48}$.

\end{enumerate}
\end{exm}
%{\bf [examples of states ?]}

To our knowledge, the $k$-uniform states in Example 4 are new since they can not be obtained from the known constructions.

\section{Construction from matrices}

In Section 3, we present a construction of $k$-uniform states in heterogeneous systems  $\CC^{d_1}\otimes \CC^{d_2}\otimes\cdots\otimes \CC^{d_n}$. The idea is to construct IrMOA from codes over mixed alphabets. However, in this construction the levels $d_i$ should be a prime power. In this section, we introduce a totally new method of producing
$k$-uniform states in heterogeneous systems  from  matrices where the levels $d_i$ are more flexible.
The idea is to generalize the symmetric matrix construction in \cite{Feng} from homogeneous systems to heterogeneous systems.

Recall that we denote $G= \bigoplus_{i\in [n]} \ZZ_{d_i}$ and $G_A=\sum_{i\in A}\bigoplus \ZZ_{d_i}$. We can write $G=G_A\bigoplus G_{\bar{A}}$.
Let $D=$ lcm$(d_1,\ldots,d_n)$ and $D_{ij}=\frac{D}{\gcd(d_i,d_j)}$. Suppose $d_1\geq\ldots\geq d_n\geq2$.
For a vector $\bc=(c_1,c_2,\ldots,c_n)\in G$, denote $\bc=(\bc_A,\bc_{\bar{A}})$, where $\bc_A=(c_i)_{i\in A}$, $\bc_{\bar{A}}=(c_i)_{i\in\bar{A}}$.
%Now we are going to show a construction of $k$-uniform state from symmetric matrix.
%Suppose $d_1\geq\ldots\geq d_n\geq2$. Denote $D=lcm(d_1,\ldots, d_n),d_{ij}=gcd(d_i,d_j),D_{ij}=\frac{D}{d_{ij}}$.
%$$G=(\bigoplus\limits_{i=1}^{n}\ZZ_{d_i},+),(\ZZ_{d_i}=\frac{\mathbb{Z}}{d_i\mathbb{Z}}=\{0,1,\ldots,d_i-1\})$$
Let $\zeta_d$ denote a $d$-th primitive root of unity in $\CC$. For two subsets $A,B\subseteq \{1,2,\dots,n\}$ and a matrix $H=(h_{ij})\in \MM_{n\times n}(\ZZ_D)$, we denote by  $H_{A\times B}$ the submatrix $(h_{ij})_{i\in A,j\in B}$.

An $n$-qudit quantum state
$|\Phi\rangle=\sum_{\bc\in G}\phi_{\bc}|\bc\rangle$ in $\CC^{d_1}\otimes \CC^{d_2}\otimes\cdots\otimes \CC^{d_n}$ is associated with a map $\phi$ from $G$ to $\CC$ given by $\phi(\bc)=\phi_{\bc}$.
 %This means that  $n$-qudit quantum states of level $d$ are identified with maps from $\ZZ_d^n$ to $\CC$.
 Thus,  an $n$-qudit quantum state $|\Phi\rangle$ can be written as $\sum_{\bc\in G}\phi(\bc)|\bc\rangle$ for a given function $\phi$.
 %A $k$-uniform  quantum state can be described in terms of its associated map $\phi$.
As a preparation, we need to give an equivalent definition on $k$-uniform state which is similar with \cite[Lemma 1]{Feng}.
\begin{thm}\label{lem:5.1}
$|\Phi\rangle$ is a k-uniform state if and only if the following conditions are satisfied $(k\geq 1)$£º
\begin{itemize}
    \item [1.] $\phi$ is not identical to zero.
    \item [2.] For any subset $A\subseteq [n]$, $\bc_A,\bc'_A\in G_A$, we have
$$\sum\limits_{c_{\bar{A}}\in G_{\bar{A}}}\bar{\phi}(\bc_A,\bc_{\bar{A}})\phi(\bc_A',\bc_{\bar{A}})=
\begin{cases}
0,\quad\mbox{if $\bc_A \neq \bc'_A$}\\
\frac{1}{|G_A|},\quad\mbox {if $\bc_A=\bc'_A$}.
\end{cases}
$$
\end{itemize}

\end{thm}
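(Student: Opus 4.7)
The plan is to unfold the matrix equation $\rho_A = \frac{1}{|G_A|}\sum_{\bc_A\in G_A}|\bc_A\rangle\langle\bc_A|$ entry by entry in the standard basis $\{|\bc_A\rangle\}_{\bc_A\in G_A}$ and compare the result with the explicit coefficient formula for $\rho_A$ that is already derived in equations~(1)--(3) of the excerpt. Since both $\rho_A$ and the maximally mixed matrix are indexed by pairs $(\bc_A,\bc'_A)\in G_A\times G_A$, the identity $\rho_A=\tfrac{1}{d_A}I$ holds if and only if the corresponding coefficients agree for every such pair, and these coefficients are exactly the inner sums $\sum_{\bc_{\bar A}\in G_{\bar A}}\phi(\bc_A,\bc_{\bar A})\bar\phi(\bc'_A,\bc_{\bar A})$. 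In other words, the sum condition in item~(2) is nothing but the pointwise translation of $\rho_A=\tfrac{1}{|G_A|}\sum_{\bc_A}|\bc_A\rangle\langle\bc_A|$. So the proof reduces to checking both directions of this translation.

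For the forward direction, assume $|\Phi\rangle$ is $k$-uniform. A pure state is normalized, so $\phi$ cannot vanish identically, giving item~(1). Fix $A\subseteq[n]$ with $|A|=k$. By Definition~\ref{defn:1}, $\rho_A=\tfrac{1}{d_A}\sum_{\bc_A\in G_A}|\bc_A\rangle\langle\bc_A|$; reading off the $(\bc_A,\bc'_A)$-entry through~(3) yields the claimed identity. For any smaller $A'\subsetneq A$, the identity transfers by taking an additional partial trace over $A\setminus A'$, using that the partial trace of a maximally mixed state on a product space is again maximally mixed on the remaining factor. For the converse, assume item~(1) and the sum condition. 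Substituting the condition back into~(3) gives $\rho_A=\tfrac{1}{d_A}\sum_{\bc_A}|\bc_A\rangle\langle\bc_A|$ for every $A$ of size $k$, which is exactly Definition~\ref{defn:1}.

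There is essentially no technical obstacle here; this is a rewriting of the density-matrix definition of $k$-uniformity in coefficient form, and the only care needed is with the quantifier over $A$. The natural reading, matching Definition~\ref{defn:1}, is that the equivalence is to be applied subset-by-subset for $|A|=k$ (the case $|A|<k$ then following for free from the partial-trace argument above); no claim about $|A|>k$ is needed or obtained. The main value of this reformulation is practical: the coefficient form in item~(2) is the version that will be checked directly for states of the form $\phi(\bc)=\zeta_D^{\bc H\bc^T}$ in the matrix constructions of Section~4.
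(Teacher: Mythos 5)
Your proof is correct and is precisely the entry-by-entry unfolding that this statement calls for; note that the paper itself supplies no proof of this theorem (it only remarks that it is ``similar with'' Lemma~1 of \cite{Feng}, and the displayed sum condition is already embedded in Definition~\ref{defn:1} via its ``i.e.''\ clause), so your verification --- including the partial-trace/splitting argument that settles subsets of size less than $k$ and the explicit reading of the quantifier over $A$ --- is if anything more complete than what the paper records. A minor observation: condition~(1) is redundant given condition~(2), since taking $\bc_A=\bc'_A$ there and summing over $\bc_A\in G_A$ forces $\sum_{\bc\in G}|\phi(\bc)|^2=1$, which already excludes $\phi\equiv 0$.
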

%\begin{proof}Similar with \cite[Lemma 1]{Feng}.\end{proof}

% $\varsigma_D=e^{\frac{2\pi\sqrt{-1}}{D}},f:G\rightarrow\mathbb{C},f(c)=\varsigma_d^{f(c)}$.
Now we are going to give two constructions, i.e., Construction I and Construction II.
Let us take a brief overview for these two constructions. Consider a map $\phi$ from $G$ to $\CC$ given by $\phi(\bc)=\phi_{\bc}$.
For an $n$-qudit state $|\Phi\rangle=\frac{1}{\sqrt{D}}\sum_{\bc\in G }\phi(\bc)|\bc\rangle$ in $\bigotimes_{i=1}^n\CC^{d_i}$  with $\phi(\bc)=\zeta_D^{\bc H\bc^T}$ where $D=$lcm$(d_1,\ldots,d_n)$, we manage to show that if the submatrix $H_{A\times \bar{A}}+H_{\bar{A}\times A}^T$ of $H$ has full row rank for each $k$-subset $A\subseteq \{1,\ldots,n\}$ and $\bar{A}=\{1,\ldots,n\}\setminus A$, then this $n$-qudit state is $k$-uniform. If $H$ is symmetric, this condition can be simplified.
Construction I utilizes the symmetric matrix $H$ to obtain $1$-uniform states and $k$-uniform states for some particular parameters.
Construction II removes the constraint that $H$ is symmetric. It turns out that if $H$ is a random matrix and the underlying alphabet size is big enough, with high probability the submatrix $H_{A\times \bar{A}}+H_{\bar{A}\times A}^T$ of $H$ has full row rank. This provides $k$-uniform states with wide range of parameters.

\subsection{Construction I: from symmetric matrices}

%Consider the map $f:G\rightarrow \ZZ_D$, for
%$\bc=(c_1,\ldots,c_n)$, define $f(\bc)=\bc\tilde{H}\bc^\mathsf{T}=\sum\limits_{i,j=1}^nc_iD_{ij}\tilde{h}_{ij}c_j\in \ZZ_D$.
%We can see that this map is well-defined. If $c_i\equiv c'_i(mod\;d_i),c_j\equiv c'_j(mod\;d_j)$, then $c_iD_{ij}\tilde{h}_{ij}c_j=c¡®_iD_{ij}\tilde{h}_{ij}c'_j(mod\;D)$ (since $D_{ij}=\frac{D}{gcd(d_i,d_j)}$).

%Let $D=$lcm$(d_1,\ldots,d_n)$ and $D_{ij}=\frac{D}{\gcd(d_i,d_j)}$. We are ready to present the construction based on symmetric matrices. Recall $G= \bigoplus_{i\in [n]} \ZZ_{d_i}$ and $G_A=\sum_{i\in A}\bigoplus \ZZ_{d_i}$.
The following Theorem shows that as long as there is a symmetric matrix satisfying certain condition, we can produce $k$-uniform states.
\begin{thm}\label{thm:symmetric}
If there is a zero diagonal symmetric matrix $H=(D_{ij}h_{ij})\in\MM_{n\times n}(\ZZ_D)$ such that for any subset $A\in [n]$ of size $k$, the map $\bc_A H_{A,\cA}=(\sum_{i\in A}c_iD_{ij}h_{ij})_{j\in \cA}\in \ZZ_D^{n-k}$ with domain $\bc_A\in G_A$ is an injection, then the  $n$-qudit state $|\Phi\rangle=\frac{1}{\sqrt{D}}\sum_{\bc\in G }\phi(\bc)|\bc\rangle$ is $k$-uniform in $\bigotimes_{i=1}^n\CC^{d_i}$  with $\phi(\bc)=\zeta_D^{\bc \widetilde{H}\bc^T}$, where $\widetilde{H}=(D_{ij}\widetilde{h}_{ij})_{n\times n}$ with $\widetilde{h}_{ij}=h_{ij}$ for $i<j$ and $0$ otherwise.
\end{thm}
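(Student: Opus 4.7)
The plan is to verify the combinatorial criterion in Theorem \ref{lem:5.1}: for each $k$-subset $A \subseteq [n]$ and each pair $\bc_A, \bc'_A \in G_A$, I must analyze
\begin{equation*}
S(\bc_A, \bc'_A) := \sum_{\bc_{\cA} \in G_{\cA}} \overline{\phi(\bc_A, \bc_{\cA})}\, \phi(\bc'_A, \bc_{\cA}) = \sum_{\bc_{\cA}} \zeta_D^{f(\bc'_A, \bc_{\cA}) - f(\bc_A, \bc_{\cA})},
\end{equation*}
where $f(\bc) := \bc\widetilde{H}\bc^T = \sum_{i<j} D_{ij} h_{ij} c_i c_j$, and show that $S = 0$ for $\bc_A \neq \bc'_A$ and $S = |G_{\cA}|$ for $\bc_A = \bc'_A$, from which the uniformity condition follows after incorporating the overall normalization.

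First I would split the pairs $(i,j)$ in the defining sum for $f$ by block: $(i,j) \in A\times A$, $(i,j) \in \cA \times \cA$, or one index in each. The symmetry $D_{ij}h_{ij} = D_{ji}h_{ji}$ lets the cross contribution, which a priori is the union of pairs with $i \in A, j \in \cA, i<j$ and pairs with $i \in \cA, j \in A, i<j$, be consolidated into the unordered bilinear form $\sum_{i \in A, j \in \cA} D_{ij} h_{ij} c_i c_j$. Taking the difference $f(\bc'_A,\bc_\cA) - f(\bc_A,\bc_\cA)$ then cancels the $\cA$-block contribution entirely and leaves an expression affine in $\bc_{\cA}$:
\begin{equation*}
f(\bc'_A, \bc_{\cA}) - f(\bc_A, \bc_{\cA}) = \bigl[f_A(\bc'_A) - f_A(\bc_A)\bigr] + \sum_{j \in \cA} \alpha_j c_j, \qquad \alpha_j := \sum_{i \in A} D_{ij} h_{ij} (c'_i - c_i).
\end{equation*}
Consequently $S$ factors as $\zeta_D^{f_A(\bc'_A)-f_A(\bc_A)} \prod_{j \in \cA} T_j$, where $T_j := \sum_{c_j = 0}^{d_j - 1} \zeta_D^{\alpha_j c_j}$.

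Each $T_j$ is a standard one-variable character sum that I would evaluate via the geometric-series identity, after first checking $d_j \alpha_j \equiv 0 \pmod{D}$. This follows from $d_j D_{ij} = D \cdot d_j/\gcd(d_i,d_j)$ being an integer multiple of $D$ for every $i$; hence $\zeta_D^{d_j \alpha_j} = 1$ and $T_j$ equals $d_j$ when $\alpha_j \equiv 0 \pmod{D}$ and $0$ otherwise. Therefore $S \neq 0$ precisely when $(\bc'_A - \bc_A)H_{A,\cA} \equiv 0$ in $\ZZ_D^{n-k}$; the injectivity hypothesis, applied to $\bc_A, \bc'_A \in G_A$, then forces $\bc_A = \bc'_A$, in which case the exponential prefactor is $1$ and the product collapses to $\prod_{j \in \cA} d_j = |G_{\cA}|$.

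The main delicate points are the bookkeeping in the second step, where the strictly upper-triangular $\widetilde{H}$ must be reconciled with the unordered bilinear sum using the symmetry of $H$, and the divisibility verification $D \mid d_j D_{ij}$ in the third step, which is what both makes each $T_j$ well-defined on $\ZZ_{d_j}$ and makes the vanishing condition match the hypothesis of the theorem exactly. Neither is deep, but both warrant care because $\widetilde{H}$ itself is not symmetric even though $H$ is.
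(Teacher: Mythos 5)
Your proposal is correct and follows essentially the same route as the paper's proof: decompose the quadratic form by blocks, use the symmetry of $H$ to turn the cross terms of the strictly upper-triangular $\widetilde{H}$ into the bilinear form $\bc_A H_{A\times\cA}\bc_{\cA}^T$, and reduce to one-variable character sums whose vanishing is governed by the divisibility $D \mid d_jD_{ij}$ and the injectivity hypothesis. The only cosmetic difference is that you factor the entire sum as $\prod_{j\in\cA}T_j$, while the paper isolates a single coordinate $j$ with nonzero coefficient and shows that inner sum alone vanishes.
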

\begin{proof}
 Consider the map $f:G\rightarrow\ZZ_D$ given by $f(\bc)= {\bc \widetilde{H}\bc^T}$. Then for every subset $A$ of  $\{1,2,\cdots,n\}$ with $|A|=k$, we have
\begin{eqnarray*}
&&f(\bc_A,\bc_{\cA})\\
&=&\bc_A(\widetilde{H}_{A\times {\cA}}+\widetilde{H}_{{\cA}\times A}^{T})\bc_{\cA}^T+\bc_A\widetilde{H}_{A\times A}\bc_A^T+\bc_{\cA}\widetilde{H}_{{\cA}\times {\cA}}\bc_{\cA}^T \\ &=&\bc_A H_{A\times {\cA}}\bc_{\cA}^T+\bc_A\widetilde{H}_{A\times A}\bc_A^T+\bc_{\cA}\widetilde{H}_{{\cA}\times {\cA}}\bc_{\cA}^T.
\end{eqnarray*}
Hence,
\begin{eqnarray*}\label{eq:2.1}&&f(\bc_A,\bc_{\cA})-f(\bc'_A,\bc_{\cA})\\&=&(\bc_A-\bc'_A){H}_{A\times {\cA}}\bc_{\cA}^T+\bc_A\widetilde{H}_{A\times A}\bc_A^T-\bc'_A\widetilde{H}_{A\times A}(\bc'_A)^T.\end{eqnarray*}
If $\bc_A=\bc'_A$, one has
\begin{eqnarray*}\sum_{\bc_{\cA}\in G_{\cA}}\overline{\phi(c_A,c_{\cA})}\phi(c'_A,c_{\cA})&=&\frac{1}{D}\sum_{\bc_{\cA}\in G_{\cA}}\zeta_D^{f(\bc_A,\bc_{\cA})}
\zeta_D^{-f(\bc_A,\bc_{\cA})}\\&=& \frac{1}{D}\prod_{i\in \cA} d_i= \frac{1}{\prod_{i\in A}d_i}=\frac{1}{|G_A|}.
\end{eqnarray*}

If $\bc_A\neq\bc'_A$, then there exists a $j\in \cA$ such that $e:=\sum_{i\in A}(c_i-c'_i)D_{ij}h_{ij}\in \ZZ_D$ is nonzero.
Moreover, since $d_jD_{ij}=D\frac{d_j}{\gcd(d_i,d_j)}$ for every $i$, this implies $e\neq 0\pmod {d_j}$ and $\zeta_D^{ e}=\zeta_{d_j}^{e'}$ with $e'=\frac{ed_j}{D}$.
Denote by $(\bc_A-\bc'_A)H_{A,\cA}$ and $\cA'=\cA-\{j\}$. Then,
\begin{eqnarray*}\sum_{\bc_{\cA}\in G_{\cA}}\overline{\phi(c_A,c_{\cA})}\phi(c'_A,c_{\cA})&=&\sum_{\bc_{\cA}\in G_{\cA}}\zeta_D^{(\bc_A-\bc'_A){H}_{A\times {\cA}}\bc_{\cA}}\\
&=&\frac{1}{D}\sum_{\bc_{\cA'}\in G_{\cA'}}\sum_{c_j\in \ZZ_{d_j}}\zeta_D^{(\bc_A-\bc'_A){H}_{A\times {\cA'}}\bc_{\cA'}+ec_j}\\
&=&\frac{1}{D}\sum_{\bc_{\cA'}\in G_{\cA'}}\zeta_D^{(\bc_A-\bc'_A){H}_{A\times {\cA'}}\bc_{\cA'}}\sum_{c_j\in \ZZ_{d_j}}\zeta_{d_j}^{e'c_j}=0.
\end{eqnarray*}
\end{proof}
\begin{Remark}
When $d_1=\ldots=d_n$, we return to the homogenous case~\cite{Feng}. If $d_1,\ldots,d_n$ are mutually co-prime, we have $D_{ij}=D$ and thus $H$ is a zero matrix. This construction is not applicable in this case. Assume $d_1\geq d_2\geq \cdots \geq d_n$ and let $A=\{1,\ldots,k\}$. The condition $H_{A\times \cA}$ has full row rank implies that $\prod_{i=1}^k d_i\leq \prod_{i=k+1}^n d_i$.
\end{Remark}
Next, we provide two examples for $1$-uniform states based on Theorem \ref{thm:symmetric}.

\begin{thm}\label{thm:4.1.2}
Assume $d_1\geq d_2\geq \cdots \geq d_n\geq 2$. If for all $i\in [n]$, it holds that
$d_i \mid$ lcm$(\frac{d_j}{\gcd(d_j,D_{ij})})_{j\neq i}$. Then, there exists a $1$-uniform state in $\CC^{d_1}\otimes \CC^{d_2}\otimes\cdots\otimes \CC^{d_n}$.
\end{thm}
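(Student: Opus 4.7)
The plan is to invoke Theorem~\ref{thm:symmetric} with $k=1$, so it suffices to produce a zero-diagonal symmetric matrix $H=(D_{ij}h_{ij})\in\MM_{n\times n}(\ZZ_D)$ for which the map $\psi_i\colon c_i\mapsto(c_iD_{ij}h_{ij})_{j\neq i}$ from $\ZZ_{d_i}$ to $\ZZ_D^{n-1}$ is injective for every $i\in[n]$. I will take the simplest possible choice: $h_{ij}=1$ whenever $i\neq j$, and $h_{ii}=0$. The resulting $H$ has off-diagonal entries $D_{ij}$ and is obviously symmetric with zero diagonal.

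With this choice, the kernel of $\psi_i$ inside $\ZZ_{d_i}$ consists of those $c_i$ satisfying $D\mid c_iD_{ij}$ for every $j\neq i$. Since $D_{ij}=D/\gcd(d_i,d_j)$, this reduces to $\gcd(d_i,d_j)\mid c_i$ for all $j\neq i$, i.e., $L_i\mid c_i$ where $L_i:=\mathrm{lcm}_{j\neq i}\gcd(d_i,d_j)$. Because $L_i$ always divides $d_i$, the kernel in $\ZZ_{d_i}$ is trivial precisely when $d_i\mid L_i$. Writing $a_k:=v_p(d_k)$ and $M:=v_p(D)=\max_k a_k$, this condition reads prime-by-prime: for every prime $p$ with $a_i>0$ there exists $j\neq i$ with $a_j\geq a_i$.

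The remaining task is to derive this valuation condition from the stated hypothesis. A direct computation using $v_p(D_{ij})=M-\min(a_i,a_j)$ gives
\[v_p\bigl(d_j/\gcd(d_j,D_{ij})\bigr)=\max\bigl(0,\,a_j+\min(a_i,a_j)-M\bigr).\]
A case split on whether $a_i\leq a_j$ or $a_i>a_j$ shows that this quantity is $\geq a_i$ only when $a_j=M$, in which case it equals exactly $a_i$. Hence the hypothesis $d_i\mid\mathrm{lcm}_{j\neq i}(d_j/\gcd(d_j,D_{ij}))$ forces, for each prime $p$ with $a_i>0$, the existence of some $j\neq i$ with $a_j=M$, which is stronger than the requirement $a_j\geq a_i$. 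Theorem~\ref{thm:symmetric} then delivers the desired $1$-uniform state.

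I expect the main obstacle to be exactly this last prime-by-prime reconciliation: the hypothesis is phrased in the opaque form involving $d_j/\gcd(d_j,D_{ij})$, while the injectivity of $\psi_i$ is naturally controlled by the cleaner expression $\mathrm{lcm}_{j\neq i}\gcd(d_i,d_j)$, and the short $p$-adic case analysis sketched above is what bridges the two. Everything else reduces to applying Theorem~\ref{thm:symmetric} verbatim.
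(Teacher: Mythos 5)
Your proposal is correct and follows essentially the same route as the paper: both choose $h_{ij}=1$ for $i\neq j$ and invoke Theorem~\ref{thm:symmetric} with $k=1$. The only (cosmetic) divergence is in verifying injectivity: you compute the kernel of $c_i\mapsto(c_iD_{ij})_{j\neq i}$ modulo $D$, arrive at the cleaner sufficient condition $d_i\mid\mathrm{lcm}_{j\neq i}\gcd(d_i,d_j)$, and then bridge to the stated hypothesis by a $p$-adic case analysis, whereas the paper reads the $j$-th coordinate modulo $d_j$ so that the divisor $\frac{d_j}{\gcd(d_j,D_{ij})}$ from the hypothesis appears verbatim --- both verifications are valid because $\frac{d_j}{\gcd(d_j,D_{ij})}$ divides $\gcd(d_i,d_j)$, so triviality of the mod-$d_j$ solution set implies triviality of the mod-$D$ one.
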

\begin{proof}
We let $H=(D_{ij}h_{ij})$ such that $h_{ij}=1$ for $i\neq j$ and $0$ otherwise.
Since $k=1$, the condition $H_{A\times \cA}$ has full row rank becomes that the equation $xD_{ij}=0 \bmod {d_{j}}$ for any $j\neq i$ has unique solution $x=0$ in $\ZZ_{d_i}$. The solution to $xD_{ij}=0 \bmod {d_{j}}$ satisfies that $\frac{d_j}{\gcd(d_j,D_{ij})}\mid x$ for any $j\neq i$. As $d_i \mid$ lcm$(\frac{d_j}{\gcd(d_j,D_{ij})})_{j\neq i}$, we conclude that $d_i\mid x$ and thus $x=0$ in $\ZZ_{d_i}$. Since this holds for any $i\in [n]$, the proof is completed.
\end{proof}

\begin{thm}\label{thm:4.1.3}
Assume $d_1\geq d_2\geq \cdots \geq d_n\geq 2$. If $d_1=d_2=D=$lcm$(d_1,d_2,\cdots,d_n)$, there exists a $1$-uniform state in $\CC^{d_1}\otimes \CC^{d_2}\otimes\cdots\otimes \CC^{d_n}$.
\end{thm}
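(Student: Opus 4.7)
The plan is to derive Theorem \ref{thm:4.1.3} as a direct corollary of Theorem \ref{thm:4.1.2} by verifying its divisibility hypothesis, namely that $d_i \mid \mathrm{lcm}_{j\neq i}\bigl(\tfrac{d_j}{\gcd(d_j,D_{ij})}\bigr)$ holds for every $i\in[n]$ under the assumption $d_1=d_2=D$. The key simplifying observation is that $D_{12}=D/\gcd(d_1,d_2)=D/D=1$, and since every $d_i$ divides $D=d_1$ we also have $D_{1i}=D/d_i$ for $i\ge 2$.

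I would split the verification into three cases. For $i=1$, pick the $j=2$ term in the lcm: since $D_{12}=1$, it equals $\tfrac{d_2}{\gcd(d_2,1)}=d_2=D=d_1$, so the divisibility holds trivially. The case $i=2$ is symmetric, using the $j=1$ term. For $i\ge 3$, pick the $j=1$ term: since $d_i\mid D=d_1$ we have $D_{i1}=D/d_i$, and therefore
\[
\frac{d_1}{\gcd(d_1,D_{i1})}=\frac{D}{\gcd(D,D/d_i)}=\frac{D}{D/d_i}=d_i,
\]
so again $d_i$ divides the lcm. This exhausts all $i$, so Theorem \ref{thm:4.1.2} applies and a $1$-uniform state exists in $\CC^{d_1}\otimes\cdots\otimes\CC^{d_n}$.

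Since the argument is a short divisibility check rather than a genuinely new construction, there is no real obstacle: the main point is to notice that the equalities $d_1=d_2=D$ force $D_{12}=1$, which immediately contributes a factor of $D$ to the lcm, and symmetrically that $D_{i1}=D/d_i$ contributes a factor of $d_i$ for every $i\ge 3$. If one prefers an explicit witness rather than invoking Theorem \ref{thm:4.1.2}, one can simply exhibit the zero-diagonal symmetric matrix $H=(D_{ij}h_{ij})$ with $h_{ij}=1$ for $i\ne j$, which is the matrix implicitly produced by the proof of Theorem \ref{thm:4.1.2}, and then the resulting state from Theorem \ref{thm:symmetric} is $1$-uniform.
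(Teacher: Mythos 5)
Your proof is correct and is essentially the same argument as the paper's: both use the zero-diagonal symmetric matrix with $h_{ij}=1$ off the diagonal and the identical key computations $D_{12}=1$ (handling $i=1,2$) and $D_{i1}=D/d_i$ hence $\frac{d_1}{\gcd(d_1,D_{i1})}=d_i$ (handling $i\ge 3$). The only difference is organizational — you route the verification through the lcm criterion of Theorem \ref{thm:4.1.2}, while the paper checks the condition of Theorem \ref{thm:symmetric} directly — but the underlying divisibility check is word-for-word the same.
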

\begin{proof}
We let $H=(D_{ij}h_{ij})$ such that $h_{ij}=1$ for $i\neq j$ and $0$ otherwise. It suffices to prove that the equation $xD_{ij}=0 \bmod {d_{j}}$ for any $j\neq i$ has exactly one solution $x=0$ in $\ZZ_{d_i}$. If $i\geq 3$, we let $j=1$ and the solution to $xD_{i1}=0 \bmod {d_{1}}$ satisfies that $\frac{d_1}{\gcd(d_1,D_{i1})}\mid x$. As $d_1=D$ and $D_{i1}=\frac{D}{\gcd(d_1,d_i)}=\frac{D}{d_i}$, we have $d_i\mid x$ and thus there is unique solution $x=0$ in $\ZZ_{d_i}$. If $i=1$, let $j=2$ and we have $D_{12}=1$. This implies that the solution to the equation $xD_{i1}=0 \bmod {d_{2}}$ is unique in $\ZZ_{d_1}$. The same argument can be applied to the case $i=2$ and $j=1$. The proof is completed.
\end{proof}

\begin{exm}
Theorem \ref{thm:4.1.3} shows that there exists a $1$-uniform states for any number of parties as long as $d_1=d_2=D=lcm\{d_1,d_2,\cdots,d_n\}$ where $d_1\geq d_2\geq \cdots \geq d_n\geq 2$.
 \begin{enumerate}
  \item [(1)] There exist AME states in $\CC^d\otimes \CC^d\otimes  \CC^{d'} (2\le d'|d)$.
  \item [(2)] Take $d_1=d_2=12$, $d_3=3,d_4=4$ and we can produce $1$-uniform state in $\CC^{12}\otimes \CC^{12}\otimes \CC^{3}\otimes \CC^{4}$.
  \item [(3)] Take $d_1=d_2=15$, $d_3=3,d_4=3,d_5=5$ and we can produce $1$-uniform state in $(\CC^{12})^{\otimes 2}\otimes (\CC^{3})^{\otimes 2}\otimes \CC^{5}$
     \end{enumerate}
\end{exm}
\

\begin{Remark}
Though 1-uniform states can be easily obtained from IrMOA since any IrMOAs with strength 1 can be easily obtained, the quantum states given here is different from the one obtained from IrMOA.
\end{Remark}
Our last construction yields a $k$-uniform state for $k\geq 2$.
\begin{thm}
Assume $k\geq 2$, $d_1=d_2=\cdots=d_{2k}=d$ and $2\leq d_{2k+i}\mid d$ for $i=1,\ldots,t$. Then, if $\gcd(d,(n-1)!)=1$, there exists a $k$-uniform state in $\bigotimes_{i=1}^n\CC^{d_i}$ with $n=2k+t$. For $t=0$, we obtain the homogeneous AME state.
For $t=1$, we obtain the heterogeneous AME states.
\end{thm}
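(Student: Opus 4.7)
The plan is to produce a symmetric, zero-diagonal matrix $H\in\MM_{n\times n}(\ZZ_d)$ meeting the hypothesis of Theorem \ref{thm:symmetric} by a Vandermonde-style construction. Note that $D=d$ in the present setting because each $d_i$ divides $d$. Since $\gcd(d,(n-1)!)=1$, the $n$ residues $x_i=i$ ($1\le i\le n$) are distinct in $\ZZ_d$ and every difference $x_i-x_{i'}$ is a unit. I would set $h_{ij}=\sum_{l=0}^{k-1}x_i^l x_j^l$ for $i\ne j$ and $h_{ii}=0$, and let $H=(D_{ij}h_{ij})$. Symmetry of $H$ and the vanishing of its diagonal are immediate from these definitions.

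What remains is to verify, for every $k$-subset $A\subseteq[n]$, that the map $\bc_A\mapsto \bc_A H_{A,\bar A}$ is injective on $G_A$. I would split $A=A_1\sqcup A_2$ with $A_1=A\cap\{1,\ldots,2k\}$, $A_2=A\cap\{2k+1,\ldots,n\}$, $k_1=|A_1|$, and set $\bar A_1=\{1,\ldots,2k\}\setminus A_1$, which has size $2k-k_1\ge k$. For every $j\in\bar A_1$ one has $d_j=d$, so $D_{ij}=1$ when $i\in A_1$ and $D_{ij}=d/d_i$ when $i\in A_2$. The vanishing mod $d$ of the $j$-th coordinate of $\bc_A H_{A,\bar A}$ then reads
\[
\sum_{i\in A_1}c_ih_{ij}+\sum_{i\in A_2}c_i\frac{d}{d_i}h_{ij}\equiv 0\pmod d.
\]

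Substituting $h_{ij}=\sum_{l=0}^{k-1}x_i^l x_j^l$ and interchanging sums, the condition above is $P(x_j)\equiv 0\pmod d$, where $P(X)=\sum_{l=0}^{k-1}\beta_l X^l$ with $\beta_l=\sum_{i\in A_1}c_i x_i^l+\sum_{i\in A_2}c_i(d/d_i)x_i^l$. Since $P$ has degree less than $k$ and vanishes mod $d$ at the $|\bar A_1|\ge k$ distinct points $\{x_j:j\in\bar A_1\}$ whose pairwise differences are units, the usual Vandermonde argument over $\ZZ_d$ forces $\beta_l\equiv 0\pmod d$ for every $l$. Introducing $\gamma_i=c_i$ for $i\in A_1$ and $\gamma_i=c_i(d/d_i)$ for $i\in A_2$, the relations $\beta_l\equiv 0$ assemble into $(\gamma_i)_{i\in A}\,V_A\equiv 0\pmod d$, where $V_A=(x_i^{l})_{i\in A,\,0\le l<k}$ is the $k\times k$ Vandermonde matrix with $\det V_A=\prod_{i<i'\in A}(x_{i'}-x_i)$ a product of units. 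So $\gamma_i\equiv 0\pmod d$ for each $i\in A$: this gives $c_i=0$ in $\ZZ_{d_i}=\ZZ_d$ for $i\in A_1$, and $c_i(d/d_i)\equiv 0\pmod d$, i.e.\ $c_i\equiv 0\pmod{d_i}$, for $i\in A_2$. Hence $\bc_A=0$ in $G_A$, and Theorem \ref{thm:symmetric} produces the desired $k$-uniform state.

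The main obstacle is the mixed-moduli nature of $G_A=\bigoplus_{i\in A}\ZZ_{d_i}$: a direct linear-algebra computation over $\ZZ_d$ is unavailable because inputs live in a direct sum of different cyclic groups. The trick I would use is to discard the weaker conditions coming from $j\in\bar A\setminus\bar A_1$ (whose moduli $d_j$ are only divisors of $d$) and to use exclusively the $|\bar A_1|\ge k$ ``full-dimensional'' coordinates, absorbing the factors $d/d_i$ into the auxiliary variables $\gamma_i$. This reduces matters to a single Vandermonde identity mod $d$, which the hypothesis $\gcd(d,(n-1)!)=1$ handles cleanly; the count $|\bar A_1|=2k-k_1\ge k$, which holds thanks to $k_1\le k$, is what makes the Vandermonde step applicable for every $k$-subset $A$.
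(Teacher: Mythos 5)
Your proposal is correct and follows essentially the same route as the paper: both invoke Theorem \ref{thm:symmetric} with a symmetric zero-diagonal matrix whose off-diagonal entries factor the relevant $k\times k$ submatrix into a product of Vandermonde matrices, restrict attention to $k$ columns inside $[2k]\setminus A$ where the modulus is the full $d$, absorb the factors $d/d_i$ into auxiliary variables, and use $\gcd(d,(n-1)!)=1$ to make the Vandermonde determinants units. The only cosmetic differences are your choice $h_{ij}=\sum_l (ij)^l$ in place of the paper's $\sum_a i^a j^{k-1-a}$ and your phrasing via polynomial interpolation rather than an explicit matrix factorization; both yield the same invertibility conclusion.
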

\begin{proof}
Define the zero diagonal symmetric matrix $H=(D_{ij}h_{ij})$ such that
$h_{ij}=\sum_{a=0}^{k-1}i^aj^{k-1-a}$ for $1\leq i\neq j\leq n$ and $0$ otherwise.
By Theorem \ref{thm:symmetric}, it suffices to show that $\bc_A H_{A,\cA}$ is an injection for any $A\subseteq [n]$ of size $k$.
Let $A=\{i_1,\ldots,i_k\}$ be a $k$-subset of $[n]$.
Then, we have $|\cA \cap [2k]|\geq k$. Let $B=\{j_1,\ldots,j_k\}\subseteq \cA\cap [2k]$ be any subset of size $k$ and we have $d_{j_1}=\cdots=d_{j_k}=d$. If we manage to show that $\bc_A H_{A,B}$ is an injection, then $\bc_A H_{A,\cA}$ is an injection as $B\subseteq \cA$.
The solution of $\bc_A H_{A,B}=0$ satisfies that
$$
0=\sum_{a=1}^k c_{i_a}D_{i_aj_b}h_{i_aj_b}=\sum_{a=1}^k c_{i_a} \frac{d}{d_{i_a}}h_{i_aj_b} \mod d.
$$
for $b=1,\ldots,k$.
Observe that
$$\tilde{H}:=(h_{ij})_{i\in A,j\in B}=\left(
                  \begin{array}{cccc}
                    i_1^{k-1} & i_1^{k-2} & \cdots & 1 \\
                   i_2^{k-1} & i_2^{k-2} & \cdots & 1 \\
                    \vdots & \vdots & \ddots & 1 \\
                    i_k^{k-1} & i_k^{k-2} & \cdots & 1 \\
                  \end{array}
                \right)\left(
                         \begin{array}{cccc}
                           1 & 1 & \cdots & 1 \\
                           j_1 & j_2 & \cdots & j_{k} \\
                           \vdots & \vdots & \ddots & \vdots \\
                           j_1^{k-1} & j_2^{k-1} & \cdots & j_k^{k-1} \\
                         \end{array}
                       \right).
$$
The determinant of the first matrix is $t_1:=\prod_{1\leq a\neq b\leq n}(i_a-i_b)$ and that of the second one is $t_2:=\prod_{1\leq a\neq b\leq n}(j_a-j_b)$. Since $-(n-1)\leq i_a-i_b\leq n-1$ and $\gcd(d,(n-1)!)=1$, $t_1$ is invertible in $\ZZ_d$. The same conclusion holds for $t_2$. Therefore, $\tilde{H} \pmod{d}$ is invertible. Since $d_{i_a}\mid d$ for $a=1,\ldots,k$, $(c_{i_1}\frac{d}{d_{i_1}},\ldots,c_{i_k}\frac{d}{d_{i_k}})\tilde{H}=0 \pmod d$ has a unique solution for $c_{i_a}\in \ZZ_{d_{i_a}}$, $a=1,\ldots,k$. This completes the proof.
\end{proof}

\begin{Remark}
This can be obtain in \cite[Theorem 3.7]{Pang}
\end{Remark}

\subsection{Construction II: from random matrices}
We present Construction II in two steps. At the first step, we assume that $d_i=p^{\alpha_i}$ for some prime number $p$ and positive integer $\alpha_i$. Then, we consider the general case $d_i=\prod_{j=1}^t p_j^{\alpha_{ij}}$.
Our construction can be seen as a generalization of the construction in \cite{Feng} where they assume $\alpha_i$ are the same. We remove the requirement that $H$ is a symmetric matrix because we want to show the existence of such matrix via a probabilistic argument. However, the spirit is the same as Construction I, i.e., the submatrices should have full row rank.
There is a $\ZZ_p$-linear isomorphism $\psi$ from $\ZZ_{p^a}$ to $\ZZ_{p}^{a}$. $\psi$ can be naturally extended to a bijection: $\ZZ_{p^a}^m \rightarrow \ZZ_{p}^{am}$. Thus, we obtain a $\ZZ_p$-linear isomorphism $\psi$ from $G=\bigoplus_{i\in [n]}\ZZ_{p^{\alpha_i}}$ to $\ZZ_{p}^{\alpha }$ with $\alpha=\sum_{i=1}^n \alpha_i$. Since we map each element in $G$ to a vector of length $\alpha$ over $\ZZ_p$, we redefine the index of matrix $H=(h_{ij})\in\MM_{\alpha \times \alpha }(\ZZ_p)$ for convenience. The row of $H$ is indexed by $[i,j]$ with $i\in [n]$ and $j\in [\alpha_i]$ and so does the column index.
The submatrix $H_{A\times\cA}$ is obtained by taking all the entries whose row index $[i_1,j_1]$ satisfies $i_1\in A$ and $j_1\in [\alpha_{i_1}]$ and column index $[i_2,j_2]$ satisfies $i_2\in \cA$ and $j_2\in [\alpha_{i_2}]$.
\begin{thm}\label{thm:primecase}
Assume $d_i=p^{\alpha_i}$ for some prime number $p$ and positive integer $\alpha_i$.
If there is a matrix $H=(h_{ij})\in\MM_{\alpha \times \alpha }(\ZZ_p)$ such that for any subset $A\in \binom{[n]}{k}$, the submatrix $H_{A\times\cA}+H_{\cA\times A}^T$ has full row rank, then the $n$-qudit state $|\Phi\rangle=\frac{1}{p^{\alpha/2}}\sum_{\bc\in G }\phi(\bc)|\bc\rangle$ is $k$-uniform in $\bigotimes_{i=1}^n\CC^{d_i}$ with $\phi(\bc)=\zeta_p^{\psi(\bc)H\psi(\bc)^T}$.
\end{thm}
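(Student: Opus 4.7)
The plan is to follow the same strategy as the proof of Theorem~\ref{thm:symmetric}, but working over $\ZZ_p$ and carefully tracking the fact that $H$ is no longer required to be symmetric. By Theorem~\ref{lem:5.1}, it suffices to verify that for every $A\in\binom{[n]}{k}$ and $\bc_A,\bc'_A\in G_A$,
\[
\sum_{\bc_{\cA}\in G_{\cA}} \overline{\tilde{\phi}(\bc_A,\bc_{\cA})}\,\tilde{\phi}(\bc'_A,\bc_{\cA}) = \begin{cases} 1/|G_A| & \bc_A=\bc'_A,\\ 0 & \bc_A\ne\bc'_A,\end{cases}
\]
where $\tilde{\phi}(\bc):=p^{-\alpha/2}\zeta_p^{f(\bc)}$ and $f(\bc):=\psi(\bc)\,H\,\psi(\bc)^T\in\ZZ_p$. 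Since $|G|=p^\alpha$, the prefactor is the correct $L^2$-normalization. Because $\psi$ is a $\ZZ_p$-linear bijection that respects the direct-sum decomposition $G=G_A\oplus G_{\cA}$, I transfer the sum to one over $\bx_{\cA}\in\ZZ_p^{\alpha_{\cA}}$ (with $\alpha_{\cA}:=\sum_{i\in\cA}\alpha_i$) and write $\bx_A=\psi(\bc_A),\bx'_A=\psi(\bc'_A)\in\ZZ_p^{\alpha_A}$.

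The core calculation is a block-form expansion of $f$. Writing
\[
f(\bx_A,\bx_{\cA}) = \bx_A H_{A\times A}\bx_A^T + \bx_A H_{A\times\cA}\bx_{\cA}^T + \bx_{\cA} H_{\cA\times A}\bx_A^T + \bx_{\cA} H_{\cA\times\cA}\bx_{\cA}^T,
\]
and using that the scalar $\bx_{\cA}H_{\cA\times A}\bx_A^T$ equals its own transpose $\bx_A H_{\cA\times A}^T\bx_{\cA}^T$, I obtain
\[
f(\bx_A,\bx_{\cA})-f(\bx'_A,\bx_{\cA}) = \bigl(\bx_A H_{A\times A}\bx_A^T - \bx'_A H_{A\times A}{\bx'_A}^T\bigr) + \Delta\,M\,\bx_{\cA}^T,
\]
where $\Delta:=\bx_A-\bx'_A$ and $M:=H_{A\times\cA}+H_{\cA\times A}^T$. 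The essential feature is that the $\bx_{\cA}H_{\cA\times\cA}\bx_{\cA}^T$ contributions cancel exactly, leaving a difference that is \emph{linear} in $\bx_{\cA}$.

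The inner sum then reduces to a character sum: up to a factor independent of $\bx_{\cA}$, it is proportional to $\sum_{\bx_{\cA}\in\ZZ_p^{\alpha_{\cA}}}\zeta_p^{\Delta M\bx_{\cA}^T}$, which equals $p^{\alpha_{\cA}}$ if $\Delta M=\mathbf{0}$ and vanishes otherwise. When $\bc_A=\bc'_A$ we have $\Delta=\mathbf{0}$ and the whole expression becomes $p^{-\alpha}\cdot p^{\alpha_{\cA}}=1/|G_A|$; when $\bc_A\ne\bc'_A$, bijectivity of $\psi$ gives $\Delta\ne\mathbf{0}$, and the full-row-rank hypothesis on $M$ forces $\Delta M\ne\mathbf{0}$, so the sum vanishes. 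The main obstacle is purely bookkeeping with the two-level $[i,j]$ index structure and confirming that $\psi$ decomposes block-wise across $A$ and $\cA$; once that is in place, the non-symmetry of $H$ is handled by the transpose trick that collapses the two cross blocks into the single matrix $M$ whose rank controls the character-sum cancellation.
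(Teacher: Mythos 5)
Your proposal is correct and follows essentially the same route as the paper's own proof: the same block expansion of $f(\bc)=\psi(\bc)H\psi(\bc)^T$, the same transpose trick collapsing the two cross blocks into $M=H_{A\times\cA}+H_{\cA\times A}^T$, and the same character-sum dichotomy driven by the full-row-rank hypothesis. The only cosmetic difference is that you phrase the vanishing step directly as $\sum_{\bx_{\cA}}\zeta_p^{\Delta M\bx_{\cA}^T}=0$ while the paper counts solutions of $\Delta M\psi(\bx)^T=b$ for each $b\in\ZZ_p$; these are equivalent.
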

\begin{proof}
 Consider the map $f$ from $G$ to $\ZZ_p$ given by $f(\bc)= {\psi(\bc) H\psi(\bc)^T}$. Then for every subset $A$ of  $\{1,2,\cdots,n\}$ with $|A|=k$, we have
\begin{eqnarray*}
&&f(\bc_A,\bc_{\cA})\\
&=&\psi(\bc_A)(H_{A\times {\cA}}+H_{{\cA}\times A}^{T})\psi(\bc_{\cA})^T+\psi(\bc_A)H_{A\times A}\psi(\bc_A)^T+\psi(\bc_{\cA})H_{{\cA}\times {\cA}}\psi(\bc_{\cA})^T
\end{eqnarray*}
\begin{eqnarray*}
&&f(\bc_A,\bc_{\cA})-f(\bc'_A,\bc_{\cA})\\&=&(\psi(\bc_A)-\psi(\bc'_A))(H_{A\times {\cA}}+H_{{\cA}\times A}^{T})\psi(\bc_{\cA})^T+\psi(\bc_A)H_{A\times A}\psi(\bc_A)^T-\psi(\bc'_A)H_{A\times A}\psi(\bc'_A)^T.
\end{eqnarray*}
If $\bc_A=\bc'_A$ and thus $\psi(\bc_A)=\psi(\bc'_A)$, one has
\begin{eqnarray*}\sum_{\bc_{\cA}\in G_{\cA}}\overline{\phi(c_A,c_{\cA})}\phi(c'_A,c_{\cA})&=\frac{1}{p^\alpha}&\sum_{\bc_{\cA}\in G_{\cA}}\zeta_p^{f(\bc_A,\bc_{\cA})}
\zeta_p^{-f(\bc_A,\bc_{\cA})}\\&=& \frac{p^{\sum_{i\in \cA}\alpha_i}}{p^\alpha}= \frac{1}{p^{\sum_{i\in A}\alpha_i}}=\frac{1}{|G_A|}.
\end{eqnarray*}
Otherwise, $(\psi(\bc_A)-\psi(\bc'_A))(H_{A\times {\cA}}+H_{{\cA}\times A}^{T})$ is a non-zero vector over $\ZZ_p$, as $H_{A\times {\cA}}+H_{{\cA}\times A}^{T}$ has full row rank.
Using the fact that $\psi$ is a bijection, $(\psi(\bc_A)-\psi(\bc'_A))(H_{A\times {\cA}}+H_{{\cA}\times A}^{T})\psi(\bx)=b$ has exactly $p^{(\sum_{i\in \cA}\alpha_i-1)}$ solutions in $\bigoplus_{i\in \cA} \ZZ_{p^{\alpha_i}}$ for every $b\in \ZZ_p$. Thus, we have
\[
\sum_{\bc_{\cA}\in G_{\cA}}\overline{\phi(\bc_A,\bc_{\cA})}\phi(\bc'_A,\bc_{\cA})=\zeta_p^g\sum_{\bc_{\cA}\in G_{\cA}}\zeta_p^{(\psi(\bc_A)-\psi(\bc'_A))(H_{A\times {\cA}}+H_{{\cA}\times A}^{T})\psi(\bc_{\cA})^{T}}=p^{(\sum_{i\in \cA}\alpha_i-1)}\zeta_p^g\sum_{b=0}^{p-1}\zeta_p^{b}=0,\]
where $g=\psi(\bc_A)H_{A\times A}\psi(\bc_A)^T-\psi(\bc'_A)H_{A\times A}\psi(\bc'_A)^T$. This completes the proof.
\end{proof}

To show the existence of such matrix in Theorem \ref{thm:primecase}, we resort to the probabilistic method.
\begin{thm}\label{thm:randomprime}
Assume $d_i=p^{\alpha_i}$ with prime $p$. Then there exists an $n$-qudit $k$-uniform quantum state in $\bigotimes_{i=1}^n\CC^{d_i}$ if
\begin{equation*}
\sum_{A\in \binom{[n]}{k}}p^{(\sum_{i\in A}\alpha_i-\sum_{i\in \cA} \alpha_i-1)}<1.
%{n\choose k}\left(1-\prod_{j=0}^{a_ik-1}\left(1-\frac1{p_i^{a_i(n-k)-j}}\right)\right)<1
\end{equation*}
\end{thm}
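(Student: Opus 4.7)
The plan is to establish the theorem via a probabilistic application of Theorem \ref{thm:primecase}. I will draw $H$ uniformly at random from $\MM_{\alpha\times\alpha}(\ZZ_p)$ and show that, with positive probability, the submatrix $M_A := H_{A\times\cA} + H_{\cA\times A}^T$ has full row rank for every $A \in \binom{[n]}{k}$. Any such $H$ yields the desired $k$-uniform state through Theorem \ref{thm:primecase}, so it suffices to control the failure probability by union bounding over $A$.

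The key structural observation is that, because $A$ and $\cA$ are disjoint, the entries of $H$ contributing to $H_{A\times\cA}$ and those contributing to $H_{\cA\times A}$ occupy completely disjoint positions of $H$: the former have row index indexed by $A$ and column index by $\cA$, whereas the latter is the opposite. Consequently $H_{A\times\cA}$ and $H_{\cA\times A}$ are \emph{independent} uniform random matrices over $\ZZ_p$, and their sum (with one transposed) $M_A$ is itself uniformly distributed on $\MM_{r_A\times c_A}(\ZZ_p)$, where $r_A := \sum_{i\in A}\alpha_i$ and $c_A := \sum_{i\in\cA}\alpha_i$.

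With the uniformity of $M_A$ in hand, I would bound the probability that $M_A$ fails to have full row rank by a standard left-kernel union bound. For any fixed nonzero $\bv\in\F_p^{r_A}$, the coordinates of $\bv^T M_A$ are independent uniform variables over $\F_p$, hence $\Pr[\bv^T M_A = \bo] = p^{-c_A}$. Grouping the nonzero vectors of $\F_p^{r_A}$ into their $(p^{r_A}-1)/(p-1)$ one-dimensional subspaces (since $\bv$ and its nonzero scalar multiples induce the same event) and applying a union bound gives $\Pr[\mathrm{rank}(M_A)<r_A] \le p^{r_A - c_A - 1}$, where the line count saves one factor of $p$ relative to the naive union bound over all $p^{r_A}-1$ nonzero vectors.

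Finally, summing the per-$A$ estimate over $A\in \binom{[n]}{k}$ yields $\Pr[H \text{ fails for some }A] \le \sum_A p^{r_A - c_A - 1}$, which is strictly less than $1$ by hypothesis; hence some $H$ meeting the requirements of Theorem \ref{thm:primecase} exists, and the corresponding state $|\Phi\rangle = p^{-\alpha/2}\sum_{\bc\in G}\zeta_p^{\psi(\bc)H\psi(\bc)^T}|\bc\rangle$ is $k$-uniform in $\bigotimes_{i=1}^n\CC^{d_i}$. The main obstacle is step two, recognizing $M_A$ as a genuinely uniform random matrix: the map $H\mapsto M_A$ is not a coordinate projection but mixes two different submatrices of $H$, and it is precisely the disjointness of $A$ and $\cA$ that renders the two summands independent, thereby reducing the rank bound to the familiar probabilistic estimate.
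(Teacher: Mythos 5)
Your proposal follows essentially the same route as the paper's proof: draw $H$ uniformly at random over $\ZZ_p$, observe that the entries feeding $H_{A\times\cA}$ and $H_{\cA\times A}$ are disjoint so that $M_A=H_{A\times\cA}+H_{\cA\times A}^{T}$ is itself a uniform random matrix, bound the per-$A$ probability of rank deficiency by roughly $p^{\sum_{i\in A}\alpha_i-\sum_{i\in\cA}\alpha_i-1}$, and finish with a union bound (equivalently, a first-moment argument) over $A\in\binom{[n]}{k}$ followed by Theorem \ref{thm:primecase}. One small caveat, which your write-up shares with the paper: the union bound over the $(p^{r_A}-1)/(p-1)$ lines of the left kernel gives $\frac{p^{r_A}-1}{p-1}\,p^{-c_A}$, and since $\frac{p^{r_A}-1}{p-1}=1+p+\cdots+p^{r_A-1}\ge p^{r_A-1}$ (strictly for $r_A\ge 2$), the claimed estimate $p^{r_A-c_A-1}$ is only valid up to a factor $\frac{p}{p-1}\le 2$ (the paper instead asserts the failure probability \emph{equals} $p^{r_A-c_A-1}$, which is likewise inexact); either way the argument goes through verbatim if the hypothesis absorbs this constant.
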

\begin{proof}
Let $H=(h_{ij})\in\MM_{\alpha \times \alpha }(\ZZ_p)$ be a uniformly random matrix. Let $X_A$ be a binary random variable such that
$X_A=1$ if $H_{A\times {\cA}}+H_{{\cA}\times A}^{T}$ has full row rank and zero otherwise.
We note that the entries of $H_{A\times {\cA}}$ and $H_{{\cA}\times A}$ are distinct random variables. Thus, $H_{A\times {\cA}}+H_{{\cA}\times A}^{T}$ is a $\sum_{i\in A}\alpha_i\times \sum_{i\in \cA}\alpha_i$ uniformly random matrix over $\ZZ_p$.
The probability that the $\sum_{i\in A}\alpha_i$ rows of $H_{A\times {\cA}}+H_{{\cA}\times A}^{T}$ are $\ZZ_p$-linearly dependent is $p^{(\sum_{i\in A}\alpha_i-\sum_{i\in \cA} \alpha_i-1)}$. By the linearity of expectation, we have
$$E[\sum_{A\in \binom{[n]}{k}}X_A]=\sum_{A\in \binom{[n]}{k}}p^{(\sum_{i\in A}\alpha_i-\sum_{i\in \cA} \alpha_i-1)}.$$
If such value is less than $1$, there exists a matrix $H$ such that for all $A\in \binom{[n]}{k}$, the submatrix $H_{A\times \cA}$ has full row rank. The proof is completed.
\end{proof}
\begin{cor}\label{cor:primecase}
Assume $d_i=p^{\alpha_i}$ with prime $p$ and $\alpha_1\geq \alpha_2\geq\cdots\geq \alpha_n$.
If $\sum_{i=1}^k \alpha_i\leq \sum_{i=k+1}^{n} \alpha_i$ and $p>\binom{n}{k}$, there exists
a $k$-uniform state in $\bigotimes_{i=1}^n\CC^{d_i}$. If $n=2k+1$, this implies the existence of AME states.
\end{cor}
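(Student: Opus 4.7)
The plan is to derive the corollary as a direct consequence of Theorem~\ref{thm:randomprime}. That theorem asserts the existence of the desired $k$-uniform state whenever
\[
\sum_{A\in \binom{[n]}{k}}p^{\sum_{i\in A}\alpha_i-\sum_{i\in \cA}\alpha_i-1}<1,
\]
so the entire argument reduces to verifying this inequality under the two hypotheses $\sum_{i=1}^k \alpha_i\leq \sum_{i=k+1}^n \alpha_i$ and $p>\binom{n}{k}$.

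First I would rewrite the exponent in a symmetric form. Letting $T:=\sum_{i=1}^n\alpha_i$, one has $\sum_{i\in \cA}\alpha_i = T-\sum_{i\in A}\alpha_i$, so the exponent equals $2\sum_{i\in A}\alpha_i - T - 1$. Because $\alpha_1\geq\alpha_2\geq\cdots\geq\alpha_n$, the sum $\sum_{i\in A}\alpha_i$ over a $k$-subset $A$ is maximized at $A=\{1,\ldots,k\}$, giving $\sum_{i\in A}\alpha_i\leq \sum_{i=1}^k\alpha_i$. Combining this with the hypothesis $\sum_{i=1}^k\alpha_i\leq \sum_{i=k+1}^n\alpha_i = T - \sum_{i=1}^k\alpha_i$ yields $2\sum_{i=1}^k\alpha_i\leq T$, and therefore $\sum_{i\in A}\alpha_i-\sum_{i\in \cA}\alpha_i\leq 0$ for every $A\in\binom{[n]}{k}$.

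Consequently, each summand satisfies $p^{\sum_{i\in A}\alpha_i-\sum_{i\in \cA}\alpha_i-1}\leq p^{-1}$, and summing over all $\binom{n}{k}$ choices of $A$ gives a total bound of $\binom{n}{k}/p$, which is strictly less than $1$ by the hypothesis $p>\binom{n}{k}$. Theorem~\ref{thm:randomprime} then produces a matrix $H$ whose induced state is $k$-uniform on $\bigotimes_{i=1}^n\CC^{d_i}$.

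Finally, when $n=2k+1$, we have $k=\lfloor n/2\rfloor$, so the $k$-uniform state produced is by definition an AME state, which completes the second assertion. No serious obstacle arises: the argument is a one-line application of Theorem~\ref{thm:randomprime} after the monotonicity observation, with the hypothesis $\sum_{i=1}^k\alpha_i\leq \sum_{i=k+1}^n\alpha_i$ playing exactly the role of ensuring that every exponent is nonpositive and $p>\binom{n}{k}$ controlling the total number of summands.
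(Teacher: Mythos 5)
Your proposal is correct and follows essentially the same route as the paper: both bound each exponent $\sum_{i\in A}\alpha_i-\sum_{i\in\cA}\alpha_i-1$ by $-1$ using the monotonicity of the $\alpha_i$ together with the hypothesis $\sum_{i=1}^k\alpha_i\leq\sum_{i=k+1}^n\alpha_i$, and then bound the sum in Theorem~\ref{thm:randomprime} by $\binom{n}{k}/p<1$. The only cosmetic difference is your introduction of $T=\sum_i\alpha_i$ to symmetrize the exponent, which the paper does implicitly.
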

\begin{proof}
The condition $\alpha_1\geq \alpha_2\geq\cdots\geq \alpha_n$ implies that
$\sum_{i\in A} \alpha_i-\sum_{i\in \cA}\alpha_i\leq \sum_{i=1}^k \alpha_i-\sum_{i=k+1}^{n} \alpha_i$ for any $A\subseteq \binom{[n]}{k}$. Therefore, the probability constraint in Theorem \ref{thm:randomprime} can be simplified as
$$
\sum_{A\in \binom{[n]}{k}}p^{(\sum_{i\in A}\alpha_i-\sum_{i\in \cA} \alpha_i-1)}\leq \binom{n}{k}p^{(\sum_{i=1}^k \alpha_i- \sum_{i=k+1}^{n} \alpha_i-1)}\leq \binom{n}{k}\frac{1}{p}<1.
$$
The proof is completed.
\end{proof}

%\begin{Remark}
%\textcolor{red}{In Corollary 18, if we take $\alpha_1=\cdots=\alpha_n$, then this reduced to the homogeous case.}\end{Remark}

We proceed to the general case that $d_i=\prod_{j=1}^t p_j^{\alpha_{ij}}$ with prime number $p_1,\ldots,p_t$ and non-negative integer
$\alpha_{ij}$. Let $\alpha_i=\max\{\alpha_{i1},\ldots,\alpha_{it}\}$ and $\alpha=\sum_{i=1}^n \alpha_i$. Let $d=\prod_{j=1}^t p_i$.  Since $d_i=\prod_{j=1}^t p_i^{\alpha_{ij}}$ can divide
$d^{\alpha_i}$, we can embed $\ZZ_{d_i}$ into $\ZZ_{d^{\alpha_i}}$. There is a $\ZZ_{d}$-linear isomorphism $\psi$ from $\ZZ_{d^{m}}$ to $\ZZ_{d}^{m}$. Similarly, we abuse the notation by extending $\psi$ to an isomorphism from $\ZZ_{d^{m}}^k$
 to $\ZZ_d^{km}$.
We index the row of a matrix $H$ with $[i,j]$ such that $i\in [n]$ and $j\in [\alpha_i]$ and so does the column index. The submatrix $H_{A_a\times\cA_a}$ is obtained from $H$ by taking all the entries whose row index $[i_1,j_1]$ satisfies $i_1\in A$ and $j_1\in [\alpha_{i_1a}]$ and column index $[i_2,j_2]$ satisfies $i_2\in \cA$ and $j_2\in [\alpha_{i_2a}]$.
\begin{exm}
%We show how to map an element in $\ZZ_{q}$ to a vector in $\ZZ_{d}^{\alpha}$ with $q=\prod_{i=1}^t p_i^{\beta_i}$, $\alpha=\max\{\beta_i\}$ and $d=\prod_{i=1}^t p_i$. This also provides an intuition about how our symmetric matrix works in the following theorem. We first decompose $\psi$ into $t$ maps $\psi_1,\ldots,\psi_t$ such that
%$\psi_i$ is a $\ZZ_{p_i}$-linear isomorphisms from $\ZZ_{p_i^{\alpha_i}}$ to $\ZZ_{p_i}^{\alpha_i}$. Then, we embed $\ZZ_{p_i}^{\alpha_i}$ into $\ZZ_{p_i}^{\alpha}$ by appending $\alpha-\alpha_i$ 0's to $\psi_i(x)\in \ZZ_{p_i}^{\alpha_i}$, i.e., $(\psi_i(x),0,\ldots,0)\in \ZZ_{p_i}^{\alpha}$. This also indicates that when we consider $\psi(\bc_A)H_{A\times\cA}\pmod {p_i}$, it is equivalent to consider $\psi_i(\bc_A)H_{A_i\times\cA_i}$ over $\ZZ_{p_i}$ as we discard those zero coordinates in $\psi(\bc_A) \pmod {p_i}$.  By Chinese Reminder Theorem, we obtain the map $\psi$ by merging these $t$ maps.
Assume $d_1=2^2\times 3=12$ and $d_2=2\times 3^2=18$. This implies that $p_1=2,p_2=3$ and $\alpha_{11}=2,\alpha_{12}=1,\alpha_{21}=1,\alpha_{22}=2$.
Given an element $(7,10)\in \ZZ_{12}\oplus \ZZ_{18}$, we obtain $\psi(7)=(1,1)\in \ZZ_6^2$ and $\psi(10)=(4,1)\in \ZZ_6^2$.
In our following argument, we want to decompose $\psi: \ZZ_{12}\rightarrow \ZZ_{6}^2$ into two maps $\psi_1:\ZZ_{12}\rightarrow \ZZ_{2}^2$ and
$\psi_2:\ZZ_{12}\rightarrow \ZZ_{3}^2$ and so does $\psi: \ZZ_{18}\rightarrow \ZZ_{6}^2$.
This implies $\psi_1(7)=(1,1)\in \ZZ_2^2$, $\psi_{2}(7)=(1,0)\in \ZZ_3^2$ and $\psi_1(10)=(0,0)\in \ZZ_2^2$, $\psi_{2}(10)=(1,0)\in \ZZ_3^2$.
We emphasize that although $\psi_2(7)$ and $\psi_2(10)$ are the same, the domain of these two maps are different.
The domain of the former one is $\ZZ_{12}$ and the latter one is $\ZZ_{18}$. Our intuition is to decompose $\psi$ so that Theorem \ref{thm:primecase} works. We abuse the notation of the map when such information is clear. This also means the second coordinate of $\psi_2(\ZZ_{12})$ is always zero while that of $\psi_2(\ZZ_{18})$ could be any value.
This will not affect our argument in the following theorem as we will remove those zero coordinates. It can be seen from the submatrix $H_{A_a\times\cA_a}$ when we pick in the statement of the following theorem.

\end{exm}
\begin{thm}\label{thm:generalcase}
Assume $d_i=\prod_{j=1}^t p_j^{\alpha_{ij}}$ with prime number $p_1,\ldots,p_t$ and non-negative integer
$\alpha_{ij}$. Let $D=\prod_{i=1}^n d_i$.
If there is a matrix $H=(h_{ij})\in\MM_{\alpha \times \alpha }(\ZZ_d)$ such that for any subset $A\in \binom{[n]}{k}$ and $a\in [t]$,  $(H_{A_a\times\cA_a}+H_{\cA_a\times A_a}^T)\pmod {p_a}$ has full row rank over $\ZZ_{p_a}$, then the  $n$-qudit state $|\Phi\rangle=\frac{1}{\sqrt{D}}\sum_{\bc\in G }\phi(\bc)|\bc\rangle$ is $k$-uniform in $\bigotimes_{i=1}^n\CC^{d_i}$ with $\phi(\bc)=\zeta_d^{\psi(\bc)H\psi(\bc)^T}$.
\end{thm}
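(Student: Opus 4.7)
The plan is to reduce the general squarefree-product case to the prime case of Theorem \ref{thm:primecase}, prime by prime, via the Chinese Remainder Theorem. Since $d = p_1 \cdots p_t$, CRT gives $\ZZ_d \cong \bigoplus_{a=1}^t \ZZ_{p_a}$ and $\ZZ_{d_i} \cong \bigoplus_a \ZZ_{p_a^{\alpha_{ia}}}$. The $d$-th root of unity factors accordingly: for each $x \in \ZZ_d$,
\[
\zeta_d^x \;=\; \prod_{a=1}^t \zeta_{p_a}^{\,c_a\,(x \bmod p_a)},
\]
where $c_a \equiv (d/p_a)^{-1} \pmod{p_a}$ is a unit in $\ZZ_{p_a}$. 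The companion maps $\psi_a\colon G \to \ZZ_{p_a}^{\alpha}$ introduced in the example preceding the theorem are precisely the prime-$p_a$ pieces of $\psi$ under this CRT; note that $\psi_a(c_i)$ is supported only on the coordinates $[i,j]$ with $j \in [\alpha_{ia}]$, the rest being zero.

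With this in hand, I would verify $k$-uniformity via Lemma \ref{lem:5.1}. Fix $A \in \binom{[n]}{k}$ and $\bc_A, \bc'_A \in G_A$. The amplitude factorizes as
\[
\zeta_d^{\psi(\bc) H \psi(\bc)^T} \;=\; \prod_{a=1}^t \zeta_{p_a}^{\,c_a\,\psi_a(\bc)\, H^{(a)}\, \psi_a(\bc)^T},
\]
where $H^{(a)} := H \bmod p_a$; because $\psi_a$ vanishes on the inactive coordinates, only the block of $H^{(a)}$ supported on $\{[i,j] : j \in [\alpha_{ia}]\}$ contributes, which is exactly where the submatrices $H_{A_a \times \cA_a}$ from the hypothesis originate. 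Combining this with the CRT decomposition $G_\cA \cong \bigoplus_a \bigoplus_{i \in \cA} \ZZ_{p_a^{\alpha_{ia}}}$, the cross sum appearing in Lemma \ref{lem:5.1} splits as a product $\prod_a \Sigma_a$, and by expanding the quadratic form exactly as in the proof of Theorem \ref{thm:primecase} (the pieces involving only $\bc_A$ or $\bc'_A$ pull out as a phase independent of $\bc_\cA$), each factor becomes
\[
\Sigma_a \;=\; \zeta_{p_a}^{g_a} \sum_{\bc_\cA^{(a)}} \zeta_{p_a}^{\,c_a\,(\psi_a(\bc'_A) - \psi_a(\bc_A))\bigl(H_{A_a \times \cA_a} + H_{\cA_a \times A_a}^T\bigr)\bc_\cA^{(a),T}},
\]
with $\bc_\cA^{(a)}$ ranging over $\bigoplus_{i \in \cA} \ZZ_{p_a}^{\alpha_{ia}}$ and $g_a$ independent of $\bc_\cA^{(a)}$.

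Two cases finish the argument. If $\bc_A = \bc'_A$, each $\Sigma_a = p_a^{\sum_{i\in\cA}\alpha_{ia}}$, so $\prod_a \Sigma_a = |G_\cA|$, giving the desired value $1/|G_A|$ after absorbing the normalization $1/D$. If $\bc_A \neq \bc'_A$, then for at least one prime $a$ the CRT projection of $\bc_A - \bc'_A$ is non-zero, so $\psi_a(\bc'_A) - \psi_a(\bc_A)$ is a non-zero vector over $\ZZ_{p_a}$; the hypothesis that $(H_{A_a \times \cA_a} + H_{\cA_a \times A_a}^T) \bmod p_a$ has full row rank then forces the frequency vector in $\Sigma_a$ to be non-zero, so character orthogonality yields $\Sigma_a = 0$ and hence $\prod_a \Sigma_a = 0$. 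The main bookkeeping obstacle is matching the active-coordinate truncations of $\psi_a$ with the submatrices $H_{A_a \times \cA_a}$ so that the prime-by-prime expansion cleanly mirrors the proof of Theorem \ref{thm:primecase}; once that correspondence is pinned down, the argument is essentially the prime case executed $t$ times in parallel.
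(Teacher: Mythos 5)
Your proposal is correct and follows essentially the same route as the paper: both reduce to the prime case of Theorem \ref{thm:primecase} via CRT, match the active coordinates of $\psi_a$ with the submatrices $H_{A_a\times\cA_a}$, and kill the cross sum by character orthogonality at any prime $p_a$ where $\bc_A-\bc'_A$ projects nontrivially, using the full-row-rank hypothesis there. The only (cosmetic) difference is that you factor the inner sum completely into a product $\prod_a\Sigma_a$ of prime-local sums, whereas the paper groups the primes into the set $S$ where the difference vanishes and its complement and runs a single character sum over $\ZZ_{p_{\bar S}}$; your version arguably makes the bookkeeping cleaner.
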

\begin{proof}
Define the map $f$ from $G$ to $\ZZ_p$ given by $f(\bc)= {\psi(\bc) H\psi(\bc)^T}$. Then for every subset $A$ of  $\{1,2,\cdots,n\}$ with $|A|=k$,
the similar argument can show that
\begin{eqnarray*}
&&f(\bc_A,\bc_{\cA})-f(\bc'_A,\bc_{\cA})\\&=&(\psi(\bc_A)-\psi(\bc'_A))(H_{A_a\times\cA_a}+H_{\cA_a\times A_a}^T)\psi(\bc_{\cA})^T+\psi(\bc_A)H_{A\times A}\psi(\bc_A)^T-\psi(\bc'_A)H_{A\times A}\psi(\bc'_A)^T.
\end{eqnarray*}
If $\bc_A=\bc'_A$ and thus $\psi(\bc_A)=\psi(\bc'_A)$, one has
\begin{eqnarray*}\sum_{\bc_{\cA}\in G_{\cA}}\overline{\phi(c_A,c_{\cA})}\phi(c'_A,c_{\cA})&=&\frac{1}{D}\sum_{\bc_{\cA}\in G_{\cA}}\zeta_d^{f(\bc_A,\bc_{\cA})}
\zeta_d^{-f(\bc_A,\bc_{\cA})}\\&=& \frac{\prod_{i\in \cA}d_i}{D}= \frac{1}{\prod_{i\in A}d_i}=\frac{1}{|G_A|}.
\end{eqnarray*}

Otherwise, $(\psi(\bc_A)-\psi(\bc'_A))$ is a non-zero vector over $\ZZ_d$ where $d=\prod_{i=1}^t p_i$. If $(\psi(\bc_A)-\psi(\bc'_A)) \pmod {p_a}$ is nonzero, then by condition we have $(\psi(\bc_A)-\psi(\bc'_A))(H_{A_a\times\cA_a}+H_{\cA_a\times A_a}^T) \pmod {p_a}$ is a nonzero vector\footnote{In fact, we can treat this map as $\psi_a$ as we define in the above example}. This implies that $(\psi(\bc_A)-\psi(\bc'_A))(H_{A_a\times\cA_a}+H_{\cA_a\times A_a}^T)\psi(\bx)=b$ has exactly $p_a^{(\sum_{i\in \cA}\alpha_{ia}-1)}$ solutions in $\bigoplus_{i\in \cA} \ZZ_{p_a^{\alpha_{ia}}}$ for every $b\in \ZZ_{p_a}$.

Let $S=\{i\in [n]: p_i | \psi(\bc_A)-\psi(\bc'_A)\}$ and $p_S=\prod_{i\in S}p_i$, $p_{\bar{S}}=\frac{d}{p_S}$ . From above argument and the Chinese Remainder Theorem, we know that for every $b\in \ZZ_{p_{\bar{S}}}$, the number of solutions $(\psi(\bc_A)-\psi(\bc'_A))(H_{A_a\times\cA_a}+H_{\cA_a\times A_a}^T)\psi(\bx)=b$ is exactly $\prod_{a\in S}p_a^{(\sum_{i\in \cA}\alpha_{ia}-1)}$ for $ \bx\in \bigoplus_{i\in \cA}\ZZ_{\prod_{a\in S} p_a^{\alpha_{ia}}}$. In other words, the number of solutions $(\psi(\bc_A)-\psi(\bc'_A))(H_{A_a\times\cA_a}+H_{\cA_a\times A_a}^T)\psi(\bx)=b \pmod {p_{\bar{S}}}$ is exactly $\frac{\prod_{i\in \cA} d_i}{p_S}$ for $\bx\in G_{\bar{A}}=\bigoplus_{i\in \cA}\ZZ_{d_i}$.
Thus, we have
\[
\sum_{\bc_{\cA}\in G_{\cA}}\overline{\phi(\bc_A,\bc_{\cA})}\phi(\bc'_A,\bc_{\cA})=\zeta_d^g\sum_{\bc_{\cA}\in G_{\cA}}\zeta_{p_{\bar{S}}}^{\big(\frac{1}{p_S}\big)(\psi(\bc_A)-\psi(\bc'_A))(H_{A_a\times\cA_a}+H_{\cA_a\times A_a}^T)\psi(\bc_{\cA})^{T}}=\frac{\prod_{i\in \cA} d_i}{p_S}\zeta_d^g\sum_{b=0}^{p_{\bar{S}}-1}\zeta_{p_{\bar{S}}}^{b}=0,\]
where $g=\psi(\bc_A)H_{A\times A}\psi(\bc_A)^T-\psi(\bc'_A)H_{A\times A}\psi(\bc'_A)^T$. This completes the proof.
\end{proof}
\begin{thm}
Assume $d_i=\prod_{j=1}^t p_j^{\alpha_{ij}}$. Then there exists an $n$-qudit $k$-uniform quantum state in $\bigotimes_{i=1}^n\CC^{d_i}$ if
\begin{equation*}
\sum_{j=1}^t\sum_{A\in \binom{[n]}{k}}p_j^{(\sum_{i\in A}\alpha_{ij}-\sum_{i\in \cA} \alpha_{ij}-1)}<1.
%{n\choose k}\left(1-\prod_{j=0}^{a_ik-1}\left(1-\frac1{p_i^{a_i(n-k)-j}}\right)\right)<1
\end{equation*}
\end{thm}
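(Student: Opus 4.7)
The plan is to prove this by a direct adaptation of the probabilistic method used in Theorem~\ref{thm:randomprime}, combined with the sufficient condition from Theorem~\ref{thm:generalcase} and the Chinese Remainder Theorem. The quantity inside the sum in the hypothesis is exactly the union-bound estimate for the failure event in Theorem~\ref{thm:generalcase}, so the structure of the argument is forced.

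First, I would sample $H=(h_{ij})\in\MM_{\alpha\times\alpha}(\ZZ_d)$ uniformly at random, where $d=p_1p_2\cdots p_t$ and $\alpha=\sum_{i=1}^n\alpha_i$. For each fixed prime $p_j$, reduction modulo $p_j$ maps a uniform element of $\ZZ_d$ to a uniform element of $\ZZ_{p_j}$ (by CRT), so the reduced matrix $H\pmod{p_j}$ is uniform over $\MM_{\alpha\times\alpha}(\ZZ_{p_j})$. Moreover, for any $A\in\binom{[n]}{k}$, the index sets defining $H_{A_j\times\cA_j}$ and $H_{\cA_j\times A_j}$ are disjoint (since $A\cap\cA=\emptyset$), so the two submatrices consist of independent uniform entries. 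Consequently, $M_{A,j}:=(H_{A_j\times\cA_j}+H_{\cA_j\times A_j}^T)\pmod{p_j}$ is a uniformly random matrix of dimensions $(\sum_{i\in A}\alpha_{ij})\times(\sum_{i\in\cA}\alpha_{ij})$ over $\ZZ_{p_j}$.

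Next I would invoke the same rank-deficiency estimate used inside Theorem~\ref{thm:randomprime}: the probability that an $r\times s$ uniformly random matrix over $\ZZ_{p_j}$ fails to have full row rank is at most $p_j^{\,r-s-1}$. Applied to $M_{A,j}$, this gives
\[
\Pr[\,M_{A,j}\text{ is rank-deficient}\,]\le p_j^{\,\sum_{i\in A}\alpha_{ij}-\sum_{i\in\cA}\alpha_{ij}-1}.
\]
Letting $X_{A,j}$ be the indicator of this event, linearity of expectation yields
\[
\E\!\left[\sum_{j=1}^t\sum_{A\in\binom{[n]}{k}}X_{A,j}\right]\le\sum_{j=1}^t\sum_{A\in\binom{[n]}{k}}p_j^{\,\sum_{i\in A}\alpha_{ij}-\sum_{i\in\cA}\alpha_{ij}-1},
\]
which by hypothesis is strictly less than $1$. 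Since the expectation of a non-negative integer-valued random variable is below $1$, there exists a realization of $H$ for which $X_{A,j}=0$ for every $A\in\binom{[n]}{k}$ and every $j\in[t]$, i.e.\ every $M_{A,j}$ has full row rank over $\ZZ_{p_j}$.

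Finally, such an $H$ meets exactly the hypothesis of Theorem~\ref{thm:generalcase}, which then produces the desired $k$-uniform state $|\Phi\rangle=\tfrac{1}{\sqrt{D}}\sum_{\bc\in G}\zeta_d^{\psi(\bc)H\psi(\bc)^T}|\bc\rangle$ in $\bigotimes_{i=1}^n\CC^{d_i}$. The only delicate point is the verification that $M_{A,j}$ is genuinely uniform: this relies on the disjointness of $A$ and $\cA$ (so that $H_{A_j\times\cA_j}$ and $H_{\cA_j\times A_j}$ involve disjoint entries of $H$) together with the CRT independence of the reductions modulo distinct $p_j$. Once these independence facts are in place, the remaining argument is a mechanical union bound and an appeal to Theorem~\ref{thm:generalcase}.
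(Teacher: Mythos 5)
Your proposal is correct and follows essentially the same route as the paper: sample $H$ uniformly over $\ZZ_d$, fix each prime $p_j$ and reuse the rank-deficiency bound from Theorem~\ref{thm:randomprime} (relying on the disjointness of the entries of $H_{A_j\times\cA_j}$ and $H_{\cA_j\times A_j}$), take a union bound over all $A\in\binom{[n]}{k}$ and $j\in[t]$, and feed the resulting matrix into Theorem~\ref{thm:generalcase}. Your write-up is in fact slightly more careful than the paper's, since you make the CRT uniformity of the mod-$p_j$ reductions explicit, but the argument is the same.
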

\begin{proof}
Let $H=(h_{ij})\in\MM_{\alpha \times \alpha }(\ZZ_d)$ be a uniformly random matrix. From Theorem \ref{thm:generalcase}, it suffices to show that for any subset $A\in \binom{[n]}{k}$ and $a\in [t]$,  $(H_{A_a\times\cA_a}+H_{\cA_a\times A_a}^T)\pmod {p_a}$ has full row rank.
We fix $a\in [t]$ and reuse the argument in Theorem \ref{thm:randomprime}. This gives us the claim that for any subset $A\in \binom{[n]}{k}$, $(H_{A_a\times\cA_a}+H_{\cA_a\times A_a}^T)\pmod {p_a}$ has full row rank with probability $1-\sum_{A\in \binom{[n]}{k}}p_a^{(\sum_{i\in A}\alpha_{ia}-\sum_{i\in \cA} \alpha_{ia}-1)}$. Taking a union bound over all $a\in [t]$, we obtain the desired result.
\end{proof}

\begin{cor}
Assume that $d_i=\prod_{j=1}^t p_i^{\alpha_{ij}}$ and $\alpha_{j_1j}\geq \alpha_{j_2j}\geq\cdots\geq \alpha_{j_nj}$ for $j\in [n]$.
If $\sum_{i=1}^k \alpha_{j_ij}\leq \sum_{i=k+1}^{n}\alpha_{j_ij}$ and $p_j>t\binom{n}{k}$, then
there exists a $k$-uniform state in $\bigotimes_{i=1}^n\CC^{d_i}$. If $n=2k+1$, this implies the existence of AME-uniform states.
\end{cor}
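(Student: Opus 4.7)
The plan is to apply the preceding random-matrix theorem and verify its hypothesis
$$\sum_{j=1}^{t}\sum_{A\in\binom{[n]}{k}}p_j^{\sum_{i\in A}\alpha_{ij}-\sum_{i\in\cA}\alpha_{ij}-1} < 1$$
by bounding each inner sum prime-by-prime, in direct analogy with Corollary \ref{cor:primecase}.

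First I would fix $j\in[t]$ and use the hypothesis that $(\alpha_{j_1 j},\alpha_{j_2 j},\ldots,\alpha_{j_n j})$ is the decreasing rearrangement of $(\alpha_{1j},\ldots,\alpha_{nj})$. Among all $k$-subsets $A\in\binom{[n]}{k}$, the quantity $\sum_{i\in A}\alpha_{ij}-\sum_{i\in\cA}\alpha_{ij}$ is maximized by taking $A=\{j_1,\ldots,j_k\}$, so
$$\sum_{i\in A}\alpha_{ij}-\sum_{i\in\cA}\alpha_{ij} \;\le\; \sum_{i=1}^{k}\alpha_{j_i j}-\sum_{i=k+1}^{n}\alpha_{j_i j} \;\le\; 0$$
by the first hypothesis of the corollary. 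Consequently every summand in the inner sum is bounded by $p_j^{-1}$, giving
$$\sum_{A\in\binom{[n]}{k}}p_j^{\sum_{i\in A}\alpha_{ij}-\sum_{i\in\cA}\alpha_{ij}-1} \;\le\; \frac{\binom{n}{k}}{p_j}.$$

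Next I would sum over $j\in[t]$ and use the second hypothesis $p_j > t\binom{n}{k}$ to get
$$\sum_{j=1}^{t}\frac{\binom{n}{k}}{p_j} \;<\; \sum_{j=1}^{t}\frac{1}{t} \;=\; 1,$$
which verifies the condition of the previous theorem, yielding the existence of a $k$-uniform state in $\bigotimes_{i=1}^{n}\CC^{d_i}$. Finally, when $n=2k+1$ we have $k=\lfloor n/2\rfloor$, so such a state is automatically an AME state.

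Since the argument is a direct lifting of Corollary \ref{cor:primecase} via a union bound over the $t$ primes, no step is genuinely difficult; the only mild subtlety is that the permutation $(j_1,\ldots,j_n)$ realising the decreasing reordering depends on the prime index $j$, but each inner sum is controlled independently so this dependence causes no trouble.
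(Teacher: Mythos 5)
Your proposal is correct and follows exactly the route the paper intends: the paper's own proof is just the remark that one repeats the argument of Corollary \ref{cor:primecase} for each prime $p_j$ and takes a union bound, which is precisely what you have written out (including the correct observation that the reordering permutation may depend on $j$ without causing any issue). No discrepancies.
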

\begin{proof}
This is the generalization of Corollary \ref{cor:primecase}. The argument is exactly the same except that we apply the argument to each $i\in [t]$.
\end{proof}

From above corollary, we can obtain a large family of $k$-uniform states. Here we give some examples.
\begin{exm}
We have the following $k$-uniform states.
\begin{enumerate}
\item [(1)] There exists a $4$-uniform state in $(\CC^{36})^{\otimes 10}\otimes (\CC^{18})^{\otimes 3}\otimes \CC^{8}\otimes \CC^{27}\otimes \CC^6$.
\item [(2)] There exists a $4$-uniform state in $(\CC^{225})^{\otimes 9}\otimes \CC^{125}\otimes \CC^{9}\otimes \CC^{45}\otimes \CC^{15}$.
\item [(3)] There exists a $5$-uniform state in $(\CC^{15})^{\otimes 17}\otimes (\CC^{3})^{\otimes 3}\otimes \CC^{5}$.
\item [(4)] There exists a $5$-uniform state in $(\CC^{59\times 61})^{\otimes 11}\otimes \CC^{59^2}\otimes \CC^{61^2}$.
\end{enumerate}
\end{exm}

\section{Conclusions}
In this paper, we investigate the existence and construction of $k$-uniform states in heterogeneous systems. We present two general constructions of $k$-uniform states. The first construction is derived from error correcting codes. We manage to show that if the distance and dual distance of an error correcting code is large than $k+1$, we can convert such an error correcting code to a $k$-uniform state. Contrary to the classical codes, our error correcting codes allow mixed alphabets which implies the existence of $k$-uniform states of heterogeneous systems.
In \cite{Feng}, a matrix construction is given for producing $k$-uniform states of homogenous systems. In Section 4, we  generalize this matrix construction to the heterogeneous systems.
This construction guarantees the existence of $k$-uniform states if a matrix $H$ meet the condition that the submatrix $H_{A\times \bar{A}}+H_{\bar{A}\times A}^T$ of $H$ has full row rank for each $k$-subset $A\subseteq \{1,\ldots,n\}$ and $\bar{A}=\{1,\ldots,n\}\setminus A$. If $H$ is symmetric, this condition can be simplified.
Then, we apply a probabilistic method to show the existence of such matrix which produces a large family of $k$-unform states.

\end{document}